\def\BibTeX{{\rm B\kern-.05em{\sc i\kern-.025em b}\kern-.08em
    T\kern-.1667em\lower.7ex\hbox{E}\kern-.125emX}}
\def\>{\ensuremath{\rangle}}
\def\<{\ensuremath{\langle}}
\newcommand{\hs}{\mathcal{H}}
\newcommand {\tr} {{\mathit{tr}}}
\newtheorem{thm}{Theorem}[section]
\newtheorem{lem}{Lemma}[section]
\newtheorem{defn}{Definition}[section]
\newtheorem{prop}{Proposition}[section]
\newtheorem{exam}{Example}[section]
\journal{}
\begin{document}

\begin{frontmatter}

\title{Symbolic Specification and Reasoning for Quantum Data and Operations}

\author{Mingsheng Ying}

\affiliation{organization={Centre for Quantum Software and Information, University of Technology Sydney}, 
           addressline={15 Broadway}, 
           city={Ultimo},
           postcode={2007}, 
          state={NSW},
            country={Australia}}

\begin{abstract}
In quantum information and computation research, symbolic methods have been widely used for human specification and reasoning about quantum states and operations. At the same time, they are essential for ensuring the scalability and efficiency of automated reasoning and verification tools for quantum algorithms and programs. However, a formal theory for symbolic specification and reasoning about quantum data and operations is still lacking, which significantly limits the practical applicability of automated verification techniques in quantum computing. 

In this paper, we present a general logical framework, called Symbolic Operator Logic $\mathbf{SOL}$, which enables symbolic specification and reasoning about quantum data and operations. Within this framework, a classical first-order logical language is embedded into a language of formal operators used to specify quantum data and operations, including their recursive definitions. This embedding allows reasoning about their properties modulo a chosen theory of the underlying classical data (e.g., Boolean algebra or group theory), thereby leveraging existing automated verification tools developed for classical computing. It should be emphasised that this embedding of classical first-order logic into $\mathbf{SOL}$ is precisely what makes the symbolic method possible.

We envision that this framework can provide a conceptual foundation for the formal verification and automated theorem proving of quantum computation and information in proof assistants such as Lean, Coq, and related systems.
\end{abstract}

\begin{keyword}
Quantum computing\sep quantum information\sep formal specification\sep formal verification\sep symbolic techniques
\end{keyword}

\end{frontmatter}

\section{Introduction}

 A variety of formal verification tools for quantum algorithms and programs have been developed over the past decade. Notable examples include QWIRE \cite{Rand17}, SQIR \cite{Rand21}, and CoqQ \cite{Zhou23}, which are based on the Coq proof assistant; QHLProver \cite{Liu19}, qrhl-tool \cite{Unruh19}, and IMD \cite{Bord21}, which are built on Isabelle/HOL; QBRICKS \cite{Brick21}, which uses Why3 and SMT solvers; and tools based on ZX-calculus \cite{ZX20}. Since quantum states, operations, and predicates (used as pre/postconditions of quantum programs \cite{DP06}, as in quantum Hoare logic \cite{Ying11}) are mathematically formalized in terms of vectors and matrices -- for example, the density matrix of an $n$-qubit mixed state is a $2^n \times 2^n$ matrix, and a quantum gate on $n$ qubits is similarly represented -- the main bottleneck for practical and scalable application of these tools lies in the specification and manipulation of large matrices. To address this challenge, several approaches have been proposed, including tensor networks \cite{tensor}, decision diagrams \cite{Wille}, automata-based representations \cite{Chen}, abstract interpretation \cite{Yu21}, and separation logic \cite{QSL1, QSL2, QSL3}.

A particularly effective approach to improving the scalability of verification techniques in classical computing is the symbolic methods (e.g. symbolic model checking \cite{Clarke} and symbolic execution \cite{King}). In particular, in deductive verification of classical programs based on Hoare logic \cite{Apt09}, pre/postconditions are specified as first-order logical formulas, and verification conditions are proved employing formal logical reasoning. So, an important question for our purpose of scalable and efficient verification of quantum programs is how can we extend symbolic methods into quantum computing.   

\subsection{Motivating examples} Indeed, the idea of symbolic methods has already been widely employed in research on quantum information and computation. Quantum data (i.e. pure and mixed quantum states) and quantum operations -- including unitary transformations (or quantum gates) and quantum measurements (or observables) -- are typically modelled using the language of linear algebra, where states and operations are represented as vectors and matrices over complex numbers, in line with the formalism of quantum mechanics. In the study of quantum computation and quantum information (and also in quantum physics), however, these vectors and matrices often contain not just concrete complex numbers but also symbolic variables, functions, and expressions. Moreover, it is common to describe multi-qubit systems, where symbolic subscripts are used to indicate different qubits, as illustrated below:

\begin{exam}[Symbolic specification of quantum states]\label{ex-spec}
In the quantum computation and information literature, it is common to represent a quantum state in a form such as
 \begin{equation}\label{exam-1}\cos\frac{\theta}{2}|2k+1\rangle_{q[m+1]}|n-5\rangle_{q[3m-4]}+\sin\frac{\theta}{2}|2k-1\rangle_{q[m+1]}|n+3\rangle_{q[3m-4]}\end{equation} where the subscripts 
$q[m+1]$ and $q[3m-4]$ denote the $(m+1)$st and $(3m-4)$th subsystems, respectively, of a composite quantum system. In a sense, quantum state (\ref{exam-1}) is parameterised by classical variables $\theta, k, m, n$. 
\end{exam}

Furthermore, many results in quantum computation and information are derived through symbolic manipulation and reasoning. A simple example is the Z–Y decomposition for a single qubit (see Theorem 4.1 in \cite{NC00}):

\begin{exam}[Symbolic reasoning about quantum gates]\label{ex-reas} Let $I,X,Y, Z$ stand for the Pauli gates on a qubit and the rotation operators about the $x,y,z$ axes be defined as follows: 
\begin{align*}R_x(\theta)=\cos\frac{\theta}{2} I-i\sin\frac{\theta}{2}X,\qquad R_y(\theta)=\cos\frac{\theta}{2} I-i\sin\frac{\theta}{2}Y, \qquad R_z(\theta)=\cos\frac{\theta}{2} I-i\sin\frac{\theta}{2}Z
\end{align*} where real number $\theta$ is the rotation angle. Then the Z-Y decomposition for a gate $U=\left(a_{ij}\right)_{i,j=0}^1$ on a single qubit $q$ asserts:
\begin{equation}\label{ZY-eq}\begin{split}|a_{00}|^2+|a_{01}|^2=|a_{10}|^2+&|a_{11}|^2=1\wedge a_{00}a_{10}^\ast +a_{01}a_{11}^\ast=0\\ &\Rightarrow (\exists\theta,\theta_1,\theta_2,\theta_3)\left(\sum_{i,j=0}^1a_{ij}|i\rangle_q\langle j|=e^{i\theta}R_z(\theta_1)[q] R_y(\theta_2)[q] R_z(\theta_3)[q]\right)
\end{split}\end{equation} where $R_z(\theta_1)[q], R_y(\theta_2)[q]$ and $R_z(\theta_3)[q]$ mean that gate $R_z(\theta_1), R_y(\theta_2)$ and $R_z(\theta_3)$ act on qubit $q$, respectively. Note that quantum gates $U, R_z, R_y$ are parameterised by classical variables $a_{ij}, \theta, \theta_1, \theta_2, \theta_3$.    
\end{exam}

Clearly, the proof of the Z–Y decomposition theorem relies on a logical mechanism that allows one to infer the properties of quantum operations on the right-hand side of (\ref{ZY-eq}) from a theory of complex numbers satisfying the conditions on the left-hand side of (\ref{ZY-eq}). It is desired to verify (2) symbolically in a proof assistant such as Lean or Coq. 

Let us further explore how symbolic methods can aid in the verification of quantum programs by examining the quantum teleportation protocol.

\begin{exam}[Symbolic verification of quantum programs]\label{ex-veri} The four Bell states or EPR pairs
\begin{align*}&|\beta_{00}\rangle=\frac{|00\rangle+|11\rangle}{\sqrt{2}},\qquad |\beta_{01}\rangle=\frac{|01\rangle+|10\rangle}{\sqrt{2}},\\ 
&|\beta_{10}\rangle=\frac{|00\rangle-|11\rangle}{\sqrt{2}},\qquad |\beta_{11}\rangle=\frac{|01\rangle-|10\rangle}{\sqrt{2}}.
\end{align*} can be symbolically specified as $$|\beta_{xy}\rangle=\frac{|0,y\rangle+(-1)^x|1,\overline{y}\rangle}{\sqrt{2}}$$ using classical Boolean variables $x,y$ as parameters, where $\overline{y}$ denotes the negation of $y$. 

The teleportation of any state $|\psi\rangle=\alpha|0\rangle+\beta|1\rangle$ of a qubit $q$ using Bell state $|\beta_{xy}\rangle$ between qubits $q_a,q_b$ possessed by Alice and Bob, respectively can be written as the following program:
\begin{align*}\mathit{TEL}_{xy}:=\ &\mathit{CNOT}[q,q_a];H[q];m:=M[q];m_a:=M[q_a];\\ &\mathbf{if}\ y\oplus m_a=1\ \mathbf{then}\ X[q_b];\ \mathbf{if}\ x\oplus m=1\ \mathbf{then}\ Z[q_b];\ \mathbf{if}\ xm_a=1\ \mathbf{then}\ \mathit{Ph}
\end{align*} where $M$ stands for the measurement in the computational basis $|0\rangle,|1\rangle$, $\oplus$ for addition modulo $2$, and $\mathit{Ph}$ for global phase $-1$. \end{exam} 
 
The correctness of the program $\mathit{TEL}_{xy}$ for the case $(x, y) = (0, 0)$ is analysed in Section 1.3.7 of \cite{NC00}. To verify its correctness for the other three cases: $(x, y) = (0, 1), (1, 0), (1, 1)$, one could perform a similar analysis for each. However, a more elegant and efficient approach is to use symbolic techniques to reason about $\mathit{TEL}_{xy}$ with formal parameters $x$ and $y$.

\subsection{The aim of this paper} 

Symbolic specification and reasoning are already widely used in quantum information and computation research. But these applications are not conducted in a disciplined or formally structured way. In particular, a rigorous logical foundation for such reasoning is still lacking. This stands in sharp contrast to the situation in classical computing, where symbolic methods are underpinned by well-established logical frameworks. The absence of such foundations in the quantum setting poses a serious obstacle to the effective application of symbolic methods in the automated formal verification of quantum algorithms and programs.

The aim of this paper is to outline a general logical framework to support symbolic specification and reasoning about quantum data and operations as exemplified in Examples \ref{ex-spec} to \ref{ex-veri}. The basic design decisions for this framework are as follows: 
\begin{enumerate}\item We adopt \textbf{\textit{a single class of syntactic entities}}, called \textit{formal operators}, for the specification of both quantum data and operations. The idea behind this design is the simple observation that the vector representing an $n$-dimensional quantum state can be seen as an $n\times 1$ matrix. The benefit of this design is in fact similar to that of the Dirac notation in quantum physics, where the manipulations of quantum states, unitary transformations and observables can often be mixed together to achieve provided all of them are written in the Dirac notation. In particular, this enables us to present inference rules for symbolic reasoning about quantum data and operations in a unified manner. The separation between quantum data and operations will then be done using different \textit{signatures} (or \textit{types}) of formal operators. 

{\vskip 4pt}

The key idea behind our symbolic specification and reasoning is that formal operators are allowed to contain classical expressions and are therefore parameterised by classical variables. This is precisely why they are referred to as \textit{formal} operators.
It is indeed an extension of formal quantum states and formal quantum predicates introduced in \cite{Ying24, Ying24a}. 

\item We assume a classical first-order logical language $\mathcal{L}$. On top of it, we define \textit{\textbf{a syntax of formal operators}}, in which complex formal operators, denoted by $A, B, \dots$, are constructed from atomic operators using a set of constructors.
For example, we may take $|s\rangle_q$, $|s\rangle_q\langle t|$, etc. as atomic operators, where $s,t$ are classical expressions in $\mathcal{L}$ and $q$ is a quantum array variable. The constructors may include linear combination and the tensor product $\otimes$. Note that the expressions $s$ and $t$ can contain not only constants (e.g., complex numbers) but also classical variables. This reflects the idea of formal operators parameterised by classical variables as discussed in (1). Furthermore, the use of quantum arrays in this syntax provides a convenient mechanism for labelling different subsystems in a many-body quantum system.

{\vskip 4pt}

Introducing such a syntax often allows for much more compact representations of quantum data and operations than their matrix counterparts, especially when the matrices involved are sparse. This is analogous to the fact that first-order logical formulas can represent classical predicates more compactly than Boolean matrices. Furthermore, this syntax is also key to the symbolic manipulation, execution, and reasoning of quantum data and operations.

{\vskip 4pt}

Essentially, (1) and (2) provide a formal formulation and symbolic extension of the ideas underlying Dirac notation, which is widely used in quantum physics, computation, and information. The key idea here is that classical first-order expressions are embedded within Dirac notation, so that formal operators become parameterised by the involved classical variables (see Figure \ref{fig 0}).
\begin{figure}[h]\centering 
\begin{equation*}
\begin{array}{ccccc}\mbox{Symbolic Operator Logic}\ \mathbf{SOL}\\ \\ \Uparrow\ \mbox{embedding}\\ \\ \mbox{First-Order Logic}\ \mathcal{L}
\end{array}
\end{equation*} \caption{Two-Layer Logical Framework.}\label{fig 0}
\end{figure}
\item Our next step is to construct a first-order logic over the formal operator formulas defined in (2), which we call \textbf{\textit{Symbolic Operator Logic}} ($\mathbf{SOL}$). In this logic, formal operators are treated as expressions (i.e., terms). We then introduce various basic properties of, or relations between, formal operators --- for example, $A \otimes B = C$ --- and use them as atomic propositions to form logical formulas specifying more complex properties of quantum data and operations.

{\vskip 4pt} 

Furthermore, we define entailment of the form
\begin{equation}\label{O-entail}
\Sigma \models \mathcal{A},
\end{equation}
where $\Sigma$ is a theory in the logic $\mathcal{L}$, and $\mathcal{A}$ is a logical formula in the symbolic operator logic $\mathbf{SOL}$. This enables (conditional) reasoning --- such as equational reasoning using rewriting techniques --- about properties of quantum data and operations represented as logical formulas (e.g., $\mathcal{A}$) in $\mathbf{SOL}$, modulo a chosen theory (e.g., $\Sigma$) describing the involved classical data and operations in logic $\mathcal{L}$. For instance, we can reason as follows:
\begin{equation}\label{eq-address}\begin{split}
\left(k=\frac{3}{2}m-2\right) &\wedge \left(n=\frac{5}{7}l-1\right) \\ 
&\models \mbox{CNOT}[q[2k+3],q[5l-2]];\mbox{CNOT}[q[3m-1],q[7n+5]] \equiv \mathbf{skip}.
\end{split}\end{equation}
Another example is that the theory of Boolean arithmetic for variables $x$ and $y$ is required in the verification of the program $\mathit{TEL}_{xy}$ in Example~\ref{ex-veri}. The entailment (\ref{O-entail}) can be further generalised to \begin{equation}\label{O-entails}\Sigma,\Gamma\models\mathcal{A}\end{equation} where $\Gamma$ is a set of logical formulas in $\mathbf{SOL}$, indicating that the theories $\Sigma$ 
 (classical) and $\Gamma$ (quantum) jointly entail a property $\mathcal{A}$ of quantum entities.

{\vskip 4pt} 

The introduction of entailments~(\ref{O-entail}) and (\ref{O-entails}) and the reasoning they enable provides a convenient mechanism for leveraging verification tools developed in classical computing to reason about quantum data and operations.
\end{enumerate}

\subsection{Organisation of the paper}

For the reader’s convenience and to fix the notation, the classical first-order logic $\mathcal{L}$ upon which our symbolic operator logic $\mathbf{SOL}$ is defined is briefly reviewed in Section \ref{sec-classic}. The basic ingredients --- quantum types, arrays and variables --- of $\mathbf{SOL}$ are introduced in Section \ref{sec-arrays}. The syntax and semantics of formal operators as terms in $\mathbf{SOL}$ are defined in Section \ref{sec-operators}. Various basic functions and predicates over formal operators are introduced in Section \ref{sec-fp}. The syntax and semantics of logical formulas in $\mathbf{SOL}$; in particular, the notion of entailment between logical formulas in $\mathbf{SOL}$ conditional on a classical first-order theory expressed in the logic $\mathcal{L}$, are defined in Section \ref{sec-sol}. Examples illustrating symbolic reasoning about properties of quantum data and operations in $\mathbf{SOL}$ are presented in Section \ref{sec-exam1}, and several examples of recursive definitions of quantum data and operations in $\mathbf{SOL}$ are provided in Section \ref{sec-exam2}. Potential further applications of $\mathbf{SOL}$ are discussed in Section \ref{sec-exam3}. A brief conclusion is drawn in Section \ref{sec-concl}.

\section{Classical First-Order Language}\label{sec-classic}

Our symbolic operator logic $\mathbf{SOL}$ is built upon a classical first-order logic $\mathcal{L}$. For the reader’s convenience, let us first briefly introduce the syntax and semantics of $\mathcal{L}$ in this section. We adopt a typed first-order language $\mathcal{L}$, following the presentation in the textbook \cite{Apt09} for classical program verification.
 
\subsection{Types}\label{c-types} Intuitively, a type is a name or notation for an intended set of values. In the logic $\mathcal{L}$, we assume a set of basic types. For example, \begin{itemize}\item $\mathbf{Bool}$ denotes the set $\{\mathbf{false}, \textbf{true}\}$; \item $\mathbf{Int}$ denotes the set of all integers $...,-2,-1,0,1,2,...$; \item In particular, we need the type $\mathbf{C}$ of complex numbers since the language defined in this section will be used later in the description of quantum states and operations.\end{itemize}

For any basic types $T_1,...,T_n$ $(n\geq 1)$ and $T$, we introduce a higher type $T_1\times ...\times T_n\rightarrow T$. Intuitively, it denotes the set of all functions from the Cartesian product of the sets denoted by $T_1, ..., T_n$  to the set denoted by $T$. Here, $T_1,...,T_n$ are called the argument types, and $T$ the value type.  

\subsection{Syntax}

The \textbf{alphabet} of our first-order language $\mathcal{L}$ consists of:\begin{itemize}\item a set $\mathit{Var}$ of variables;
\item a set of constants;
\item logical connectives $\neg$ (negation), $\wedge$ (conjunction), $\vee$ (disjunction), $\rightarrow$ (implication);
\item universal quantifiers $\forall$, existential quantifier $\exists$.
\end{itemize}

Each variable $x$ is equipped with a type $T_x$. If $T_x$ is a basic type, then $x$ is called a simple variable; if $T_x$ is a higher type, then $x$ is called an array variable. Similarly, each constant is also equipped with a type. For example, integer constants $..., -2,-1,0,1,2,...$ have the basic type $\mathbf{Int}$, boolean constants $\mathbf{true},\mathbf{false}$ have the basic type $\mathbf{Bool}$, and each complex number has the basic type $\mathbf{C}$. Furthermore, a function symbol can be seen as a constant of a higher type $T_1\times ...\times T_n\rightarrow T$. Similarly, a predicate symbol can be seen as a constant of a higher type $T_1\times ...\times T_n\rightarrow\mathbf{Bool}$.

{\vskip 4pt}

\textbf{Expressions} (i.e. terms) are formed from variables and constants by applying function symbols.  
\begin{defn}\label{def-c-expression} Expressions in $\mathcal{L}$ are inductively defined as follows: \begin{enumerate}\item A simple variable and a constant of basic type $T$ are both expressions of type $T$; 
\item If $s_1,...,s_n$ are expressions of types $T_1,...,T_n$, respectively, and $\mathit{op}$ is a constant of type $T_1\times ...\times T_n\rightarrow T$, then \begin{equation}\label{op-syn}\mathit{op}(s_1,...,s_n)\end{equation} is an expression of type $T$;
\item If $s_1,...,s_n$ are expressions of types $T_1,...,T_n$, respectively, and $a$ is an array variable of type $T_1\times ...\times T_n\rightarrow T$, then \begin{equation}\label{sub-syn}a[s_1,...,s_n]\end{equation} is an expression of type $T$, often called a subscripted variable; 
\item If $b$ is a boolean constant and both $s_0,s_1$ are expressions of type $T$, then \begin{equation}\label{if-syn}\mathbf{if}\ b\ \mathbf{then}\ s_1\ \mathbf{else}\ s_0\ \mathbf{fi}\end{equation} is an expression of type $T$.
\end{enumerate}\end{defn}

It is clear from the definition that all expressions have a basic type. We write $\mathit{var}(s)$ for the set of variables in expression $s$.  

\textbf{Logical formulas} are constructed from atomic propositions (i.e. predicate symbols applied to expressions, i.e. terms) using logical connectives and quantifiers. 

\begin{defn} First-order logical formulas in $\mathcal{L}$ are inductively defined as follows:\begin{enumerate}\item If $p$ is a predicate symbol, i.e. a constant of type $T_1\times\cdots\times T_n\rightarrow\mathbf{Bool}$, and $s_1,...,s_n$ are expressions of type $T_1,...,T_n$, respectively, then $p(s_1,...,s_n)$ is a logical formula, called atomic proposition; \item If $\varphi,\psi$ are logical formulas, so are $\neg\varphi, \varphi\wedge\psi,\varphi\vee\psi$ and $\varphi\rightarrow\psi$;
\item If $\varphi$ is a logical formula and $x$ a variable, then $(\forall x)\varphi$ and $(\exists x)\varphi$ are logical formulas. 
\end{enumerate}
\end{defn}

We write $\mathit{free}(\varphi)$ for the set of free variables of $\varphi$, that is, the variables not bound by the quantifiers. The set of logical formulas in $\mathcal{L}$ is denoted by $\mathit{Wff}_\mathcal{L}$. 

{\vskip 4pt}

\textbf{Substitution} is a basic operation that replaces variables in an expression or logical formula with specific terms or expressions. It enables us to apply general logical rules to particular cases.

\begin{defn}\label{def-term-sub}\label{def-c-sub} Let $x$ be a simple or subscripted variable of type $T$ and let $t$ an expression of the same type. Then:\begin{enumerate}\item For any expression $s$, expression $s[t/x]$ is obtained substitution of $t$ for $x$ in $s$. More precisely, it is inductively defined as follows:\begin{enumerate}\item If $s=y$ is a simple variable with $y\neq x$, or $s=c$ is a constant, then $s[t/x]=s$; if $s=x$, then then $s[t/x]=t$.
\item If $s=\mathit{op}(s_1,...,s_n)$, then $s[t/x]=\mathit{op}(s_1[t/x],...,s_n[t/x])$. 
\item Let $s=a[s_1,...,s_n]$ for an array $a$. If $x$ is a simple variable or $x=a^\prime[s_1^\prime,...,s_m^\prime]$ with $a^\prime\neq a$, then $s[t/x]=a[s_1[t/x],...,s_n[t/x]]$; if $x=a[s_1^\prime,...,s_m^\prime]$, then 
$$s[t/x]=\mathbf{if}\ \bigwedge_{i=1}^ns_i[t/x]=s_i^\prime\ \mathbf{then}\ t\ \mathbf{else}\  a[s_1[t/x],...,s_n[t/x]].$$  
\end{enumerate}
\item For any logical formula $\varphi$, substitution $\varphi[t/x]$ is inductively defined as follows:\begin{enumerate}\item If $\varphi=p(s_1,...,s_n)$ is an atomic proposition, then $\varphi[t/x]=p(s_1[t/x],...,s_n[t/x])$.
\item If $\varphi=\neg\psi$, then $\varphi[t/x]=\neg\psi[t/x]$; if $\ast\in\{\wedge,\vee,\rightarrow\}$ and $\varphi=\varphi_1\ast\varphi_2$, then $\varphi[t/x]=\varphi_1[t/x]\ast\varphi_2[t/x]$. 
\item Let $Q\in\{\forall,\exists\}$ and $\varphi=(Qy)\psi$. If $y=x$ then $\varphi[t/x]=\varphi$; if $y\neq x$ then $\varphi[t/x]=(Qy^\prime)\psi[y^\prime/y][t/x]$, where $y^\prime\notin\mathit{free}(Q)\cup\mathit{var}(t)\cup\{x\}$.
\end{enumerate}
\end{enumerate}
\end{defn}

The above notion of substitution can be generalised to simultaneous substitution $s[\overline{t}/\overline{x}]$ of a string $\overline{t}$ of expressions for a string $\overline{x}$ of distinct variables.   
 
\subsection{Semantics}

As an interpretation of our logical language $\mathcal{L}$, we introduce the notion of first-order structure. It is defined as a pair $\mathcal{S}=(\mathcal{D},\mathbb{I})$, where:
\begin{itemize}\item $\mathcal{D}$ is a family $\left\{\mathcal{D}_T\right\}$, where for each basic type $T$, $\mathcal{D}_T$ is a nonempty set, called the domain of type $T$; in particular, $\mathcal{D}_{\mathbf{Int}}=\{...,-2,-1,0,1,2,...\}$, $\mathcal{D}_{\mathbf{Bool}}=\{\mathbf{true},\mathbf{false}\}$, and $\mathcal{D}_\mathbf{C}$ is the set of complex numbers. Then for a higher type, we define:\begin{equation}\label{highD}\mathcal{D}_{T_1\times ...\times T_n\rightarrow T}=\mathcal{D}_{T_1}\times ...\times \mathcal{D}_{T_n}\rightarrow \mathcal{D}_{T}\end{equation} (the set of all mappings from the Cartesian product of $\mathcal{D}_{T_1},..., \mathcal{D}_{T_n}$ into $\mathcal{D}_{T}$);
\item $\mathbb{I}$ is an interpretation; that is, each constant $c$ of a basic or higher type $T$ is interpreted as an element $c^{\mathbb{I}}$ in $\mathcal{D}_T$.
In particular: \begin{itemize}\item a function symbol $F$ of type $T_1\times ...\times T_n\rightarrow T$ is interpreted as a mapping $F^\mathbb{I}:\mathcal{D}_{T_1}\times ...\times\mathcal{D}_{T_n}\rightarrow\mathcal{D}_T$; and \item a predicate symbol $P$ of type $T_1\times ...\times T_n\rightarrow \mathbf{Bool}$ is interpreted as a mapping $P^\mathbb{I}:\mathcal{D}_{T_1}\times ...\times\mathcal{D}_{T_n}\rightarrow\{\mathbf{true},\mathbf{false}\}$. \end{itemize}
\end{itemize}

Given a structure $\mathcal{S}$ with domains $\mathcal{D}$. Then a state in $\mathcal{S}$ is a mapping $\sigma$ from each variable $x$ of a basic or higher type $T$ to an element $\sigma(x)$ in $\mathcal{D}_T$. The set of all states in $\mathcal{S}$ is denoted by $\mathit{Stat}_\mathcal{S}$. We often simply write $\mathit{Stat}$ for $\mathit{Stat}_\mathcal{S}$ if the structure $\mathcal{S}$ is fixed or can be recognised from the context.    

Let $\sigma$ be a state, $x$ a variable and $d$ a value in $\mathcal{D}_{T}$, where $T$ is the type of $x$. Then the updated state $\sigma[x:=d]$ is defined as follows: for any variable $x^\prime$, \begin{equation}\label{eq-updated}\sigma[x:=d](x^\prime)=\begin{cases}d\ &\mbox{if}\ x^\prime=x; \\ \sigma(x)\ &\mbox{if}\ x^\prime\neq x.
\end{cases}\end{equation}

\begin{defn}\label{def-term-value} The value $\sigma(s)$ of an expression $s$ in a state $\sigma\in\Sigma$ is inductively defined as follows:\begin{enumerate}\item If $s$ is a simple variable, then $\sigma(s)$ is given by $\sigma$, and if $s$ is a constant of a basic type, then $\sigma(s)=s^{\mathbb{I}}$; 
\item If $s=\mathit{op}(s_1,...,s_n)$, then $\sigma(s)=\mathit{op}^{\mathbb{I}}(\sigma(s_1),...,\sigma(s_n))$;
\item If $s=a[s_1,...,s_n]$, then $\sigma(s)=\sigma(a)(\sigma(s_1),...,\sigma(s_n));$
\item If $s=\mathbf{if}\ b\ \mathbf{then}\ s_1\ \mathbf{else}\ s_0\ \mathbf{fi}$, then $$\sigma(s)=\begin{cases}\sigma(s_1)\ &\mbox{if}\ \sigma(b)=\mathbf{true},\\ \sigma(s_0)\   &\mbox{otherwise}.\end{cases}$$
\end{enumerate}
\end{defn}

\begin{defn} The satisfaction relation $\sigma\models\varphi$, meaning that a state $\sigma\in\Sigma$ satisfies a logical formula $\varphi$, is inductively defined as follows:\begin{enumerate}\item $\sigma\models p(s_1,...,s_n)$ iff $p^{\mathbb{I}}(\sigma(s_1),...,\sigma(s_n))=\mathbf{true}$; \item $\sigma\models\neg\varphi$ iff not $\sigma\models\varphi$, $\sigma\models\varphi\wedge\psi$ iff $\sigma\models\varphi$ and $\sigma\models\psi$, $\sigma\models\varphi\vee\psi$ iff $\sigma\models\varphi$ or $\sigma\models\psi$, $\sigma\models\varphi\rightarrow\psi$ iff $\sigma\models\varphi$ implies $\sigma\models\psi$; 
\item If $x$ is a variable of type $T$, then:\begin{itemize}\item $\sigma\models(\forall x)\varphi$ iff $\sigma[x:=d]\models\varphi$ for any $d\in\mathcal{D}_T$, 
\item $\sigma\models(\exists x)\varphi$ iff $\sigma[x:=d]\models\varphi$ for some $d\in\mathcal{D}_T$. 
\end{itemize} 
\end{enumerate}
\end{defn}

The following lemma establishes a connection between substitution and state update. 

\begin{lem}[Substitution]\label{lem-classical-sub}For any variable $x$, expressions $s,t$ and logical formula $\varphi$, and for any state $\sigma$, we have:\begin{enumerate}\item $\sigma(s[\overline{t}/\overline{x}])=\sigma[\overline{x}:=\sigma(\overline{t})](s)$; \item $\sigma\models\varphi[\overline{t}/\overline{x}]$ iff $\sigma[\overline{x}:=\sigma(\overline{t})]\models\varphi$.  
\end{enumerate}
\end{lem}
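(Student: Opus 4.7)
The plan is to prove part (1) first by structural induction on the expression $s$ following Definition \ref{def-c-expression}, and then to derive part (2) by structural induction on $\varphi$, invoking part (1) at the atomic step. For clarity I would present the argument for a single substitution $s[t/x]$ and $\varphi[t/x]$; the simultaneous case follows by essentially the same induction applied componentwise to the strings $\overline{x}, \overline{t}$.

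For part (1), the base cases are immediate: a constant equals its interpretation on both sides; a simple variable $y = x$ gives $y[t/x] = t$, whose $\sigma$-value is $\sigma(t) = \sigma[x:=\sigma(t)](y)$ by (\ref{eq-updated}); and a simple variable distinct from $x$ is untouched by the update. The cases $s = \mathit{op}(s_1,\ldots,s_n)$ and the conditional expression $\mathbf{if}\ b\ \mathbf{then}\ s_1\ \mathbf{else}\ s_0\ \mathbf{fi}$ follow directly from the induction hypothesis (the latter by cases on $\sigma(b)$). The delicate case is $s = a[s_1,\ldots,s_n]$ when $x$ is itself a subscripted variable $a[s_1',\ldots,s_n']$ of the same array: Definition \ref{def-c-sub}(1)(c) produces a conditional whose guard $\bigwedge_k s_k[t/x] = s_k'$ is designed precisely to mirror the case split built into the updated state. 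Using the induction hypothesis on each $s_k$, one shows that the guard holds under $\sigma$ exactly when the updated array returns the fresh value $\sigma(t)$ at position $(\sigma(s_1),\ldots,\sigma(s_n))$; otherwise both sides fall through to the unchanged cell.

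For part (2), the atomic case $\varphi = p(s_1,\ldots,s_n)$ reduces immediately to part (1) applied to each argument, and the propositional connectives are routine from the induction hypothesis. The subtle step is the quantifier case $\varphi = (Qy)\psi$. By Definition \ref{def-c-sub}(2)(c), the substitution renames $y$ to a fresh $y' \notin \mathit{free}(\psi) \cup \mathit{var}(t) \cup \{x\}$, yielding $(Qy')\psi[y'/y][t/x]$. Freshness of $y'$ drives the induction: since $y' \notin \mathit{var}(t)$ we have $\sigma[y':=d](t) = \sigma(t)$, and since $y' \neq x$ the two updates $[x:=\sigma(t)]$ and $[y':=d]$ commute. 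Applying the induction hypothesis to $\psi[y'/y]$ (which is structurally smaller than $\varphi$) then reduces both the $\forall$ and $\exists$ clauses to the same quantified statement over $d \in \mathcal{D}_T$.

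The main obstacle is the careful bookkeeping in these two cases. In the array-subscript clause of part (1) one must verify that the conditional introduced by Definition \ref{def-c-sub} evaluates under $\sigma$ to exactly the value returned by the updated state, a fact that hinges on the semantics of higher types given in (\ref{highD})--(\ref{eq-updated}). In the quantifier clause of part (2) one must confirm that the freshness condition on $y'$ is precisely what is needed for the two state updates to commute and for $\sigma(t)$ to be preserved. An auxiliary observation useful at both places --- that $\sigma(s)$ depends only on $\sigma$ restricted to $\mathit{var}(s)$, and analogously $\sigma \models \varphi$ depends only on $\sigma$ restricted to $\mathit{free}(\varphi)$ --- streamlines both arguments and is itself a routine structural induction.
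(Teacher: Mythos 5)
Your proposal is correct: it is the standard structural induction (on expressions for part (1), then on formulas for part (2)), with the two genuinely delicate cases --- the subscripted-variable clause of Definition \ref{def-c-sub}(1)(c) and the bound-variable renaming in the quantifier clause --- correctly identified and handled via the freshness conditions and the coincidence observation that values depend only on the variables occurring free. The paper itself states Lemma \ref{lem-classical-sub} without proof, as review material from the classical program-verification literature (cf.\ \cite{Apt09}), so there is no authorial argument to compare against; your sketch supplies exactly the omitted standard argument. The only point worth tightening is that $\psi[y'/y]$ is not a strict subterm of $(Qy)\psi$, so the induction in part (2) should formally be on the size (or depth) of the formula, which renaming preserves.
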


We are not going to further review the proof system of classical first-order logic $\mathcal{L}$. To conclude this section, we would like to point out that the materials reviewed in this section will serve a twofold purpose:
\begin{itemize}
\item Our symbolic operator logic $\mathbf{SOL}$ will be built upon logic $\mathcal{L}$; in other words, $\mathcal{L}$ is embedded in $\mathbf{SOL}$.
\item Essentially, $\mathbf{SOL}$ is also a first-order logic (though the individuals in it are operators). Therefore, $\mathbf{SOL}$ will be developed following the pattern of $\mathcal{L}$.
\end{itemize}

\section{Quantum Types, Arrays and Variables}\label{sec-arrays}

From now on, we fix the first-order logical language $\mathcal{L}$ and a first-order structure as an interpretation of $\mathcal{L}$. Our symbolic operator logic $\mathbf{SOL}$ for specification and reasoning about quantum data and operations is built upon them. In this section, we start to introduce some basic ingredients of $\mathbf{SOL}$, namely quantum types, arrays and variables. These concepts were first introduced in \cite{Ying24a, Ying24b}.   

\subsection{Quantum types} 

First of all, we introduce a way of quantising some (\textit{but may not all}) of classical types:
\begin{itemize}\item For a basic type $T$, we assume a corresponding quantum type, also denoted $T$ by a slight abuse of notation. The quantum domain of type $T$ is then the Hilbert space spanned by the classical domain of $T$:
\begin{equation}\label{D2H}\hs_T=\mbox{span}\left\{|d\rangle: d\in\mathcal{D}_T\right\}.\end{equation} That is, the classical domain $\mathcal{D}_T$ is an orthonormal basis of $\hs_T$, called the standard basis. For example, the quantum domain $\hs_{\mathbf{Bool}}$ of type $\mathbf{Bool}$ is the $2$-dimensional Hilbert space, i.e. the state space of a qubit, and the quantum domain of type $\mathbf{Int}$ is $$\hs_{\mathbf{Int}}=\left\{\sum_{i=-\infty}^\infty \alpha_i|i\rangle:\sum_{i=-\infty}^\infty|\alpha_n|^2<\infty\right\}.$$

\item For a higher type $T_1\times ...\times T_n\rightarrow T$, we only quantise its value type $T$, but leave its argument types $T_1,...,T_n$ being classical. Thus, the quantum domain of $T_1\times ...\times T_n\rightarrow T$ is
\begin{equation}\label{highH}\hs_{T_1\times...\times T_n\rightarrow T}=\hs_T^{\otimes (\mathcal{D}_{T_1}\times ...\times \mathcal{D}_{T_n})}=\bigotimes_{d_1\in \mathcal{D}_{T_1},...,d_n\in \mathcal{D}_{T_n}}\hs_T\end{equation} i.e., the tensor product of multiple copies of $\hs_T$ indexed by values $d_1,...,d_n$ of type $T_1,...,T_n$, respectively. Essentially, (\ref{highH}) is obtained through replacing $\mathcal{D}_T$ by $\hs_T$ in (\ref{highD}) (but leaving $\mathcal{D}_{T_1},...,\mathcal{D}_{T_n}$ unchanged). 
\end{itemize}

\subsection{Quantum variables}\label{sec-qvar} 

Corresponding to the variables used in the classical language $\mathcal{L}$, we introduce two sorts of quantum variables:\begin{itemize}\item simple quantum variables, of a basic type, say $T$; \item array quantum variables, of a higher quantum type, say $T_1\times ...\times T_n\rightarrow T$.  
\end{itemize}

It is worth noting that the types of classical variables and quantum variables are the same, but their interpretations are different; more precisely, the value of a classical variable $x$ with type $T$ is an element in $\mathcal{D}_T$, whereas the value of a quantum variable $q$ with the same type $T$ is an vector in the Hilbert space $\hs_T$. Thus, the state space of $q$ is defined as $\hs_q=\hs_T$. 

The notion of subscripted variables in Definition \ref{def-c-expression}(3) can be straightforwardly generalised to the quantum setting:   

\begin{defn}Let $q$ be an array quantum variable of type $T_1\times ...\times T_n\rightarrow T$, and for each $1\leq i\leq n$, let $s_i$ be a classical expression of type $T_i$. Then \begin{equation}\label{qsub-syn}q[s_1,...,s_n]\end{equation} is called a subscripted quantum variable of type $T$.\end{defn}

Subscripted quantum variable (\ref{qsub-syn}) is defined in a way similar to subscripted classical variable (\ref{sub-syn}). Intuitively, array variable $q$ denotes a quantum system composed of subsystems indexed by $(d_1,...,d_n)\in \mathcal{D}_{T_1}\times ...\times \mathcal{D}_{T_n}$. Thus, whenever expression $s_i$ is evaluated to a value $d_i\in \mathcal{D}_{T_i}$ for each $i$, then $q[s_1,...,s_n]$ indicates the subsystem of index tuple $(d_1,...,d_n)$. For simplification of the presentation, a simple variable $q$ is often viewed as a subscripted variable $q[s_1,...,s_n]$ for $n=0$.  

\begin{exam}Let $q$ be a qubit array of type $\mathbf{Int}\times\mathbf{Int}\rightarrow\mathbf{Bool}$. Then $q[2x+y,7-3y]$ is a subscripted qubit variable. If $x=5$ and $y=-1$ in a classical state $\sigma$, then it stands for the qubit $q[9,10]$; that is, $\sigma(q[2x+y,7-3y])=q[9,10]$ (see Definition \ref{def-ground} below).\end{exam}

Let $\overline{q}=q_1,...,q_n$, where $q_i$ is a simple or subscripted quantum variable of type $T_i$ for $1\leq i\leq n$. Then the dimension of $\overline{q}$ is defined as 
$\dim\overline{q}=\prod_{i=1}^n|T_i|,$ where $|T_i|$ stands for the size of (classical) type $T_i$. Thus, the state space of $\overline{q}$ is the Hilbert space $\hs_{\overline{q}}=\bigotimes_{i=1}^n\hs_{q_i}$ of dimension $\dim\overline{q}$. 

The no-cloning theorem for quantum information prohibits that a quantum operation is performed on two identical qubits. To reflect this requirement in the syntax, we introduce:

\begin{defn}Let $\overline{q} = q_1[s_{11},...,s_{1n_1}],...,q_k[s_{k1},...,s_{kn_k}]$ be a sequence of simple or subscripted quantum variables. Then the distinctness for $\overline{q}$ is defined as the following first-order logical formula:
\begin{equation}\label{distinctness}\mathit{Dist}(\overline{q})=\bigwedge_{i,j\ \mbox{with}\ q_i=q_j}(\exists l\leq n_i)\left(s_{il}\neq s_{jl}\right).\end{equation}
\end{defn}

This distinctness will be used frequently throughout this paper to ensure the well-definedness of concepts involving multiple qubits. For a given classical structure $\mathcal{S}$ as an interpretation of the first-order logic $\mathcal{L}$ and a state $\sigma\in\mathit{Stat}_\mathcal{S}$ the satisfaction relation: $$\sigma\models\mathit{Dist}(q_1[t_{11},...,s_{1n_1}],...,q_k[s_{k1},...,s_{kn_k}])$$ means that in the classical state $\sigma$, $q_1[s_{11},...,s_{1n_1}],$ $...,q_k[s_{k1},...,s_{kn_k}]$ denotes $k$ distinct quantum systems. 

By a \textit{ground quantum variable} we mean a simple quantum variable or $q[d_1,...d_n]$, where $q$ is an array quantum variable of type $T_1\times ...\times T_n$ and $d_i\in\mathcal{D}_{T_i}$ for $1\leq i\leq n$. Then we can introduce: 

\begin{defn}\label{def-ground}Let $\sigma$ be a classical state in the given first-order structure $\mathcal{S}$. Then:\begin{itemize}\item for a simple quantum variable $q$, we define $\sigma(q)=q$;\item for a subscripted quantum variable, we define: \begin{equation}\label{sig-sub}\sigma(q[t_1,...,t_n])=q[\sigma(t_1),...,\sigma(t_n)]\end{equation} where $\sigma(t_i)$ stands for the value of expression $t_i$ in state $\sigma$ (see Definition \ref{def-term-value}).\end{itemize} Furthermore, 
for a string $\overline{q}=q_1,...,q_n$ of simple or subscripted quantum variables, we set: \begin{equation}\label{sig-string}\sigma(\overline{q})=\sigma(q_1),...,\sigma(q_n).\end{equation}\end{defn}
It should be noted that (\ref{sig-sub}) is a ground variable and (\ref{sig-string}) is a string of ground variables. 

\section{Formal Operators}\label{sec-operators}

Now we start to develop our symbolic operator logic $\mathbf{SOL}$. As the first step, in this section, we define the syntax and semantics of formal operators. As said before, operators are treated as individuals in logic $\mathcal{L}$. Therefore, formal operators defined in this section are in fact the expressions (i.e., terms of individuals) in $\mathbf{SOL}$. 

\subsection{Operator Variables and Constants}\label{op-var}

In the alphabet of $\mathbf{SOL}$, we assume:\begin{enumerate}\item a set of operator variables, ranged over by meta-symbols $X,X_1,X_2,...$; and \item a set of basic operator symbols, called operator constants, ranged over by meta-symbols $O, O_1, O_2,...$.\end{enumerate} 

Each operator variable $X$ is equipped with a quantum type of the form: \begin{equation}\label{op-type-0}T_1\times ...\times T_m\rightarrow T_1^\prime\times ...\times T_n^\prime\end{equation} where $T_1,...,T_m, T_1^\prime, ...,T_n^\prime$ are all basic types. Intuitively, a value of $X$ is a linear operator from Hilbert space  $\hs_{T_1\times ...\times T_m}$ to $\hs_{T^\prime_1\times ...\times T^\prime_n}$, where the quantum domain of a product type is defined by
\begin{equation}\label{op-type}\hs_{T_1\times ...\times T_m}=\bigotimes_{i=1}^m\hs_{T_i}.\end{equation}
It should be noticed that $m=0$ or $n=0$ in type (\ref{op-type}) is allowed. Let us write $[]$ for the empty type, i.e. $T_1\times ...\times T_m$ with $m=0$. Then:\begin{itemize}\item an operator variable $X$ of type $[]\rightarrow T_1^\prime\times ...T_n^\prime$ denotes a bra-vector $\langle\varphi|$ in Hilbert space $\hs_{T_1^\prime\times...\times T_n^\prime}$;
\item an operator variable $X$ of type $T_1\times ...T_m\rightarrow []$ denotes a ket-vector $|\varphi\rangle$ in Hilbert space $\hs_{T_1\times...\times T_m}$;
\item an operator variable $X$ of type $[] \rightarrow []$ denotes a complex number $c$. 
\end{itemize}
Whenever $T_1\times ...\times T_m=T_1^\prime\times ...\times T_n^\prime$, then we simply say that an operator variable $X$ of type (\ref{op-type-0}) has type $T_1\times ...\times T_m$. 

Similarly, each operator constant $O$ is equipped with a quantum type like (\ref{op-type-0}). Additionally, it is equipped a type for its classical parameters, say $T_1^{\prime\prime}\times ...\times T_k^{\prime\prime}$. This means that $O$ can carry $k$ classical (formal) parameters $x_1,...,x_k$ such that for each $1\leq i\leq k$, $x_i$ is a classical variable of type $T_i^{\prime\prime}.$

\begin{exam}\begin{enumerate}\item As operator constants, Hadamard gate $H$ and Pauli gates $I, X, Y, Z$ has quantum type $\mathbf{Bool}$, and the controlled-not gate $\mbox{CNOT}$ has quantum type $\mathbf{Bool}\times\mathbf{Bool}$. Since they do not carry any parameter, their parameter type is empty $[]$. 
\item The rotation operators $R_x,R_y$ and $R_z$ in Example \ref{ex-reas} have quantum type $\mathbf{Bool}$ and parameter type $\mathbf{R}$ (real numbers) for parameter $\theta$. 
\item The quantum state (\ref{exam-1}) in Example \ref{ex-spec} can be seen as an operator constant with quantum type $\mathbf{Int}\times\mathbf{Int}$ and parameter type $\mathbf{R}\times\mathbf{Int}\times\mathbf{Int}$ because it has three parameters $\theta,k$ and $m$. 
\end{enumerate}
\end{exam}

An operator variable, an operator constant or more generally a formal operator will be used to describe a transformation from a sequence of quantum variables (denoting a composite quantum system) to another sequence of quantum variables. To specify explicitly the domains and codomains of operators, we introduce:   

\begin{defn}[Signatures]\begin{enumerate}\item A signature is defined as a formula of the form \begin{equation}\label{eq-sign}\overline{q}\rightarrow\overline{q^\prime}\end{equation} where $\overline{q}$ and $\overline{q^\prime}$ are strings of simple or subscripted quantum variables (and can be empty string $\epsilon$).
\item If $\overline{q}$ and $\overline{q^\prime}$ are strings of ground quantum variables, then (\ref{eq-sign}) is called a ground signature.
\end{enumerate}
\end{defn}

Let $\overline{q}=q_1,...,q_m$ and $\overline{q^\prime}=q^\prime_1,...,q^\prime_n$. For each $i$, let $q_i$ have the type of $T_i$ and $q_i^\prime$ have type $T_i^\prime$. Then the type of signature (\ref{eq-sign}) is defined to be $T_1\times ...\times T_m\rightarrow T_1^\prime ...\times T_n^\prime$. In particular, let $\epsilon$ stand for the empty string of quantum variables, then signature $\overline{q}\rightarrow\epsilon$ has type $T_1\times ...\times T_m\rightarrow []$, signature $\epsilon\rightarrow\overline{q^\prime}$ has type $[]\rightarrow T_1^\prime\times...\times T_n^\prime$, and signature $\epsilon\rightarrow\epsilon$ has type $[]\rightarrow []$. As we will see shortly, when an operator symbol $O$ applies to a signature $\overline{q}\rightarrow\overline{q^\prime}$, their types must be matched. 

\subsection{Syntax of Formal Operators}\label{sec-fop-s}

Now we are able to present the syntax of formal operators. The alphabet of our language for formal operators includes that of the classical first-order logical language $\mathcal{L}$ in Section \ref{sec-classic} together with: \begin{itemize}\item quantum variables given in Section \ref{sec-arrays}; \item basic operator symbols $O, O_1, O_2,...$, called operator constant symbols, each of which is assigned a quantum type and a parameter type (as discussed in Subsection \ref{op-var}); \item operator variables $X, X_1,X_2,...$, each of which is assigned a quantum type;  
and \item operation symbols $\cdot$ (scalar multiplication), $^\dag$ (adjoint), $+$ (addition), $\cdot$ (multiplication) and $\otimes$ (tensor product). \end{itemize}

A common way to specify quantum data and operations is to represent them as matrices or operators. However, when the dimension of the Hilbert space under consideration is too large, it becomes impractical to specify them explicitly. Therefore, we need to introduce:

\begin{defn}[Syntax]\label{def-fop} Formal Operators (ranged over by $A,A_1,A_2,...$) are defined by the following syntax:
\begin{equation}\begin{split} A::= \ c \ &|\ |s\rangle_q \ | \ \langle s|_q\ |\ X[\overline{q}\rightarrow\overline{q^\prime}] \ |\ O(t_1,...,t_k)[\overline{q}\rightarrow\overline{q^\prime}]\\ & |\ cA\ |\ A^\dag\ |\ A_1+A_2\ | \ A_1A_2\ |\ A_1\otimes A_2
\end{split}\end{equation} More precisely, a formal operator $A$ and its signature $\mathit{Sign}(A)$ are defined as follows:\begin{enumerate}\item a classical expression $c$ of type $\mathbf{C}$ (complex numbers) is a formal operator, and $\mathit{Sign}(c)=[]\rightarrow []$;
\item if $s$ is a classical expression of the same type as simple or subscripted quantum variable $q$, then both $|s\rangle_q$ and $\langle s|_q$ are formal operators, and $\mathit{Sign}(|s\rangle_q)=q\rightarrow []$, $\mathit{Sign}(\langle s|_q)=[]\rightarrow q$; 
\item if $X$ is a operator variable of the same type as signature $\overline{q}\rightarrow\overline{q^\prime}$, then $X[\overline{q}\rightarrow\overline{q^\prime}]$ is a formal operator, and $\mathit{Sign}(X[\overline{q}\rightarrow\overline{q^\prime}])=\overline{q}\rightarrow\overline{q^\prime}$. In particular, whenever $\overline{q}=\overline{q^\prime}$, then $X[\overline{q}\rightarrow\overline{q^\prime}]$ can be abbreviated to $X[\overline{q}]$;
\item if $\overline{t}$ is a string of classical expressions with types $T_1,...,T_k$, respectively, and $O$ is an operator constant of the same quantum type as signature $\overline{q}\rightarrow\overline{q^\prime}$ and assigned parameter type $T_1\times ...\times T_k$, then $O(\overline{t})[\overline{q}\rightarrow\overline{q^\prime}]$ is a formal operator, and $\mathit{Sign}(O(\overline{t})[\overline{q}\rightarrow\overline{q^\prime}])=\overline{q}\rightarrow\overline{q^\prime}$. In particular, whenever $\overline{q}=\overline{q^\prime}$, then $O(\overline{t})[\overline{q}\rightarrow\overline{q^\prime}]$ can be abbreviated to $O(\overline{t})[\overline{q}]$;
\item if $c$ is a classical expression of type $\mathbf{C}$ and $A$ a formal operator, then $cA$ (as an abbreviation of $c\cdot A$) is a formal operator, and $\mathit{Sign}(cA)=\mathit{Sign}(A)$;
\item if $A$ is a formal operator with $\mathit{Sign}(A)=\overline{q}\rightarrow\overline{q^\prime}$, then so is $A^\dag$, and $\mathit{Sign}(A^\dag)=\overline{q^\prime}\rightarrow\overline{q}$;
\item if $A_1,A_2$ are formal operators with the same signature, then $A_1+A_2$ is a formal operator, and $\mathit{Sign}(A_1+A_2)=\mathit{Sign}(A_1)$;
\item if $A_1,A_2$ are formal operators with $\mathit{Sign}(A_1)=\overline{q}\rightarrow\overline{q^\prime}$ and $\mathit{Sign}(A_2)=\overline{q^\prime}\rightarrow\overline{q^{\prime\prime}}$, then $A_1A_2$ (as an abbreviation of $A_1\cdot A_2$) is a formal operator, and  $\mathit{Sign}(A_1A_2)=\overline{q}\rightarrow\overline{q^{\prime\prime}}$; 
\item if $A_1,A_2$ are formal operators with $\mathit{Sign}(A_i)=\overline{q_i}\rightarrow\overline{q_i^\prime}$ for $i=1,2$, then $A_1\otimes A_2$ is a formal operator, and $\mathit{Sign}(A_1\otimes A_2)=\overline{q_1},\overline{q_2}\rightarrow\overline{q_1^\prime},\overline{q_2^\prime}$. 
\end{enumerate}
\end{defn}

It is worth noting that signature $\overline{q}\rightarrow\overline{q^\prime}$ in a basic formal operators $O(\overline{t})[\overline{q}\rightarrow\overline{q^\prime}]$ and $X[\overline{q}\rightarrow\overline{q^\prime}]$ is used to indicate the quantum variables in the domain and codomain of operator constant $O$. In the above definition, only $\cdot$ (scalar multiplication), $^\dag$ (adjoint), $+$ (addition), $\cdot$ (multiplication) and $\otimes$ (tensor product) are allowed to be operations symbols (of operators). In the future applications, more operation symbols can be introduced; e.g. $e^A$, $\log A$, $\sqrt{A}$ and tracing out certain quantum variables. For simplicity of the presentation, we often introduce some derived operations of formal operators (as syntactic sugars). For example, if $I$ is a basic operator symbol denoting the identity operator, then for any operator $A$, we write $\neg A:=I-A=I+(-1)A.$

It is easy to see that a formal operator $A$ is not always meaningful. Let $\mathit{Sign}(A)=\overline{q}\rightarrow\overline{q^\prime}$. Then 
$A$ is meaningful only when the quantum variables in $\overline{q}$ are distinct, and so are those in $\overline{q^\prime}$. However, this cannot be determined syntactically, as discussed in Subsection \ref{sec-qvar}. This issue will be resolved by introducing the signing rules in Subsection~\ref{sec-sign}.

The syntax of formal operators defined above allows us to construct more and more complex formal operators from atomic ones like $|s\rangle_q, \langle s|_q, O[\overline{q}\rightarrow\overline{q^\prime}]$ and $X[\overline{q}\rightarrow\overline{q^\prime}]$
using the operations like $^\dag, +, \cdot, \otimes$, etc. Let us give an example to show how it can significantly simplify the specification of quantum data and operations.   

\begin{exam}\label{exam-pred}For any classical bit array $J$ of type $\mathbf{Int}\rightarrow\mathbf{Bool}$, we write segment $J[k:l]$ for the restriction of $J$ to the interval $[k:l]=\{\mbox{integer}\ i|k\leq i\leq l\}$. Let $q$ be a qubit array of the same type $\mathbf{Int}\rightarrow\mathbf{Bool}$. We consider the segment $q[k:l]$ --- a composite system of $l-k+1$ qubits. Then: \begin{enumerate}\item Each basis state of $q[k:l]$can be written as a formal operator $$|J[k:l]\rangle_{q[k:l]}=|J[k]\rangle_{q[k]}\otimes ...\otimes |J[l]\rangle_{q[l]}$$ for some classical bit array $J$. 
\item In particular, the projection operator onto the subspace spanned by all basis states $|J[k:l]\rangle_{q[k:l]}$ with $J[i]\neq 1$ for some $k\leq i\leq l$ can be simply written as $$\neg|J_0[k:l]\rangle_{q[k:l]}\langle J_0[k,l]|$$ where $J_0$ is the bit array such that $J_0[i]=0$ for all $k\leq i\leq l.$\end{enumerate}
\end{exam}

Indeed, constructing complex matrices and operators from simpler ones is a common practice in the quantum information and computation literature. Definition \ref{def-fop} is simply a formalisation of this idea in the language of formal logic.

\subsection{Substitution}\label{sec-fop-sub}

 Substitution is a key technique in symbolic reasoning. It was defined in Definition \ref{def-term-sub} for classical logic $\mathcal{L}$. To accomodate this technique in reasoning about quantum data and operations, let us introduce: 
\begin{defn}[Substitution of Classical Variables] Let $x$ and $t$ be the same as in Definition \ref{def-term-sub}. Then for any formal operator $A$, formal operator $A[t/x]$ is obtained by substitution of $t$ for $x$ in $A$. More precisely, it is inductively defined as follows: \begin{enumerate}\item if $A=c$, then $A[t/x]=c$; \item if $A=|s\rangle_q$, then $A[t/x]=|s[t/x]\rangle_{q[t/x]}$, where $s[t/x]$  is given by Definition \ref{def-term-sub}, and $q[t/x]=q^\prime[s_1[t/x],...,s_n[t/x]]$ if $q=q^\prime[s_1,...,s_n]$ (in particular, $q[t/x]=q$ if $q$ is a simple quantum variable); \item if $A=\langle s|_q$, then $A[t/x]=\langle s[t/x]|_{q[t/x]}$; 
\item if $A=X[\overline{q}\rightarrow\overline{q^\prime}]$, then $A[t/x]=X[\overline{q}[t/x]\rightarrow\overline{q^\prime}[t/x]]$, where $\overline{q}[t/x]=q_1[t/x],...,q_m[t/x]$ if $\overline{q}=q_1,...,q_m$ (in particular, $\epsilon[t/x]=\epsilon$); 
\item if $A=O(\overline{s})[\overline{q}\rightarrow\overline{q^\prime}]$, then $A[t/x]=O(\overline{s}[t/x])[\overline{q}[t/x]\rightarrow\overline{q^\prime}[t/x]]$, where $\overline{s}[t/x]=s_1[t/x],...,s_k[t/x]$ if $\overline{s}=s_1,...,s_k$ (in particular, $\epsilon[t/x]=\epsilon$); 
\item if $A=cA^\prime$, then $A[t/x]=cA^\prime[t/x]$;
\item if $A=A^{\prime\dag}$, then $A[t/x]=(A^\prime[t/x])^\dag$; 
\item if $A=A_1+A_2$, then $A[t/x]=A_1[t/x]+A_2[t/x]$;
\item if $A=A_1A_2$, then $A[t/x]=A_1[t/x]\cdot A_2[t/x]$;
\item if $A=A_1\otimes A_2$, then $A[t/x]=A_1[t/x]\otimes A_2[t/x]$.
\end{enumerate}
\end{defn}

Essentially, the above definition lifts the notion of substitution from the underlying logic $\mathcal{L}$ to the expressions in our symbolic operator logic $\mathbf{SOL}$. Such a lifting will be further extended to logical formulas in $\mathbf{SOL}$. We can also introduce:

\begin{defn}[Substitution of Operator Variables] Let $X$ be an operator variable, and let $A,B$ be formal operators with $\mathit{Sign}(B)=\overline{q}\rightarrow\overline{q^\prime}$. Then formal operator $A[B/X]$ is obtained by substitution of $B$ for $X$ in $A$. Formally, it is inductively as follows: \begin{enumerate}\item if $A=c, |s\rangle_q, \langle s|_q, O(\overline{t})[\overline{r}\rightarrow\overline{r^\prime}]$, or $Y[\overline{r}\rightarrow\overline{r^\prime}]$ with $Y\neq X$, then $A[B/X[\overline{q}\rightarrow\overline{q^\prime}]]=A$;
\item if $A=X[\overline{r}\rightarrow\overline{r^\prime}]$, then $A[B/X]=B[\overline{r},\overline{r^\prime}/\overline{q},\overline{q^\prime}]$ (the substitution of $\overline{r},\overline{r^\prime}$ for $\overline{q},\overline{q^\prime}$ in $B$);
\item $(cA)[B/X]=c A[B/X]$;
\item $(A^\dag)[B/X]=(A[B/X])^\dag$;
\item $(A_1+A_2)[B/X]]=A_1[B/X]+A_2[B/X]$;
\item $(A_1A_2)[B/X]=A_1[B/X]\cdot A_2[B/X]$;
\item $(A_1\otimes A_2)[B/X]=A_1[B/X]\otimes A_2[B/X]$.
\end{enumerate}\end{defn}  

The above two definitions can be straightforwardly generalised to simultaneous substitutions $A[\overline{t}/\overline{x}]$ and $A[\overline{B}/\overline{X}].$ 

\subsection{Assigning Signatures}\label{sec-sign}

 To define the semantics of formal operators, we fix a first-order structure $\mathcal{S}$ as defined in Section \ref{sec-classic}. Let $\sigma\in\mathit{Stat}_\mathcal{S}$ be a classical state in the given structure. As pointed out in Subsection \ref{sec-fop-s}, it is not the case that the semantics of every formal operator $A$ in $\sigma$ is well-defined. To ensure the well-definedness of its semantics, let us first define the signature of $A$ in $\sigma$. 
 For each simple or subscripted quantum variable $q$ of type $T$, $\sigma(q)$ is a ground quantum variable denoting a quantum system with state space $\hs_T$. 
 For a signature $\overline{q}\rightarrow\overline{q^\prime}$, we define $\sigma\left(\overline{q}\rightarrow\overline{q^\prime}\right)=\sigma\left(\overline{q}\right)\rightarrow\sigma\left(\overline{q^\prime}\right)$ as a ground signature. Furthermore, we introduce:
 
 \begin{defn}\begin{enumerate}\item We say that a formal operator $A$ is well-signed in a classical state $\sigma$ if a ground signature judgement of the form:
 \begin{equation}\label{eq-sign-rules}\sigma\models A:\overline{q}\rightarrow\overline{q^\prime}\end{equation} can be inferred from the signing rules in Table \ref{fig-sign-rules}. 
 \begin{table}[t]
\begin{equation*}\begin{split}&(\mbox{Sign-Cons})\quad \sigma\models c:\epsilon\rightarrow\epsilon\qquad (\mbox{Sign-Stat})\quad \sigma\models |s\rangle_q:\sigma(q)\rightarrow\epsilon\quad \sigma\models \langle s|_q:\epsilon\rightarrow\sigma(q)\\ &(\mbox{Sign-OpV})\quad \frac{\sigma\models\mathit{Dist}(\overline{q})\wedge\mathit{Dist}(\overline{q^\prime})}{\sigma\models X[\overline{q}\rightarrow\overline{q^\prime}]:\sigma\left(\overline{q}\rightarrow\overline{q^\prime}\right)} \quad (\mbox{Sign-OpC})\quad \frac{\sigma\models\mathit{Dist}(\overline{q})\wedge\mathit{Dist}(\overline{q^\prime})}{\sigma\models O(\overline{t})[\overline{q}\rightarrow\overline{q^\prime}]:\sigma\left(\overline{q}\rightarrow\overline{q^\prime}\right)}\\ &(\mbox{Sign-Scal})\quad \frac{\sigma\models A: \overline{q}\rightarrow\overline{q^\prime}}{\sigma\models cA: \overline{q}\rightarrow\overline{q^\prime}}\qquad\qquad\qquad\ (\mbox{Sign-Diag})\quad \frac{\sigma\models A: \overline{q}\rightarrow\overline{q^\prime}}{\sigma\models A^\dag: \overline{q^\prime}\rightarrow\overline{q}}\\ &(\mbox{Sign-Add})\quad \frac{\sigma\models A_i: \overline{q}\rightarrow\overline{q^\prime}\ (i=1,2)}{\sigma\models A_1+A_2: \overline{q}\rightarrow\overline{q^\prime}}\qquad\ \ \ (\mbox{Sign-Mult})\quad 
\frac{\sigma\models A_1: \overline{q^\prime}\rightarrow\overline{q^{\prime\prime}}\quad \sigma\models A_2: \overline{q}\rightarrow\overline{q^\prime}}{\sigma\models A_1A_2: \overline{q}\rightarrow\overline{q^{\prime\prime}}}\\ 
&(\mbox{Sign-Tensor})\quad \frac{\sigma\models A_i: \overline{q_i}\rightarrow\overline{q_i^\prime}\ (i=1,2)\quad \overline{q_1}\cap\overline{q_2}=\overline{q_1^\prime}\cap\overline{q_2^\prime}=\emptyset}{\sigma\models A_1\otimes A_2: \overline{q_1};\overline{q_2}\rightarrow\overline{q_1^\prime};\overline{q_2}}
 \end{split}\end{equation*}
\caption{Signing Rules for Formal Operators.}\label{fig-sign-rules}
\end{table}
\item Let $\Sigma$ be a set of classical first-order logic formulas, $A$ a formal operators and $\overline{q},\overline{q^\prime}$ strings of simple or subscripted quantum variables. Then by $$\Sigma\models A:\overline{q}\rightarrow\overline{q^\prime}$$ we mean that for any classical state $\sigma$, whenever $\sigma\models\Sigma$, then $\sigma\models A:\sigma(\overline{q}\rightarrow\overline{q^\prime}).$
\end{enumerate}
 \end{defn}
 
 The following lemma presents a connection between the signatures of a formal operator and its substitutions. 
 \begin{lem}[Signature for Substitution]\label{lem-sub-sign} For any ground signature  $\overline{q}\rightarrow\overline{q^\prime}$, we have: \begin{equation}\label{eq-sign1}\sigma\models A[\overline{t}/\overline{x}]:\overline{q}\rightarrow\overline{q^\prime}\ \mbox{iff}\ \sigma[\overline{x}:=\sigma(\overline{t})]\models A:\overline{q}\rightarrow\overline{q^\prime}.\end{equation}
 \end{lem}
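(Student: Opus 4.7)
The plan is to prove the lemma by structural induction on the formal operator $A$, following the clauses of Definition 4.1. Before starting the induction, I would isolate an auxiliary fact that will be reused in almost every case: for any simple or subscripted quantum variable $q$, and for any string $\overline{q}$ of such variables,
\begin{equation*}
\sigma(q[\overline{t}/\overline{x}]) = \sigma[\overline{x}:=\sigma(\overline{t})](q), \qquad \sigma(\overline{q}[\overline{t}/\overline{x}]) = \sigma[\overline{x}:=\sigma(\overline{t})](\overline{q}).
\end{equation*}
The first equation is immediate when $q$ is simple and, for a subscripted variable $q = q'[s_1,\dots,s_n]$, follows from unfolding (\ref{sig-sub}) and applying Lemma~\ref{lem-classical-sub}(1) componentwise to each index $s_i$. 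The second equation then follows by extending componentwise through (\ref{sig-string}). In particular, $\sigma\left(\left(\overline{q'}\rightarrow\overline{q''}\right)[\overline{t}/\overline{x}]\right) = \sigma[\overline{x}:=\sigma(\overline{t})]\left(\overline{q'}\rightarrow\overline{q''}\right)$.

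With that in hand, I would run through the clauses of Definition 4.1. The base cases $A=c$, $A=|s\rangle_q$, and $A=\langle s|_q$ are handled by direct inspection of the rules (Sign-Cons) and (Sign-Stat): in each case the signing judgement is forced uniquely by the shape of $A$, and the auxiliary fact above matches $\sigma(q[\overline{t}/\overline{x}])$ with $\sigma[\overline{x}:=\sigma(\overline{t})](q)$. For $A = X[\overline{r}\rightarrow\overline{r'}]$ and $A = O(\overline{s})[\overline{r}\rightarrow\overline{r'}]$, the rules (Sign-OpV) and (Sign-OpC) have a side premise $\sigma\models\mathit{Dist}(\overline{r})\wedge\mathit{Dist}(\overline{r'})$, which is a classical first-order formula in $\mathcal{L}$; so Lemma~\ref{lem-classical-sub}(2) gives
$\sigma\models\mathit{Dist}(\overline{r}[\overline{t}/\overline{x}])\wedge\mathit{Dist}(\overline{r'}[\overline{t}/\overline{x}])$ iff $\sigma[\overline{x}:=\sigma(\overline{t})]\models\mathit{Dist}(\overline{r})\wedge\mathit{Dist}(\overline{r'})$, and the conclusion signatures coincide by the auxiliary fact. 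The inductive cases $A = cA'$, $A = A'^\dag$, $A = A_1+A_2$, $A = A_1 A_2$, and $A = A_1 \otimes A_2$ are routine: each signing rule is syntax-directed with a unique applicable rule, substitution distributes over the constructor by Section~\ref{sec-fop-sub}, and applying the induction hypothesis to the premises reassembles the corresponding judgement on the other side of the ``iff''.

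The main obstacle I anticipate is the tensor case (Sign-Tensor), which carries the extra side condition $\overline{q_1}\cap\overline{q_2}=\overline{q_1'}\cap\overline{q_2'}=\emptyset$ on ground variables. After substitution this becomes a condition on $\sigma(\overline{q_i}[\overline{t}/\overline{x}])$, and I must argue it is equivalent to the analogous disjointness condition under the updated state $\sigma[\overline{x}:=\sigma(\overline{t})]$. This again reduces to the auxiliary fact, since set-theoretic disjointness of ground variable strings is preserved once the strings themselves are identified. A minor care-point, to be flagged but not belaboured, is that the formal operator language contains no binders, so the freshness machinery in Definition~\ref{def-c-sub}(2c) plays no role here, and the induction really is a straightforward syntax-directed traversal. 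Finally, the generalisation from a single substitution $[t/x]$ to the simultaneous substitution $[\overline{t}/\overline{x}]$ used in the statement is inherited from the corresponding generalisation at the classical level, so no separate argument is needed.
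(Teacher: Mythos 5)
Your proposal is correct and follows exactly the route the paper indicates: the paper's proof is a one-line appeal to induction on the structure of $A$, and your argument is a faithful elaboration of that same induction, with the auxiliary identity $\sigma(q[\overline{t}/\overline{x}])=\sigma[\overline{x}:=\sigma(\overline{t})](q)$ and the use of Lemma~\ref{lem-classical-sub} supplying the details the paper leaves implicit.
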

 \begin{proof} We can prove this lemma by induction on the structure of $A$. 
 \end{proof}
 
 \subsection{Semantics}
 
 Now we are ready to present the definition of the semantics of formal operators. Essentially, the semantics of formal operators is obtained by embedding the semantics of classical expressions into it. Given a first-order structure $\mathcal{S}$ of logic $\mathcal{L}$. We first introduce:
 \begin{defn} A quantum structure over the given classical structure $\mathcal{S}$ is a pair $\mathcal{Q}=(\mathcal{H},\mathbb{I})$, where: \begin{enumerate}\item $\hs$ is a family $\left\{\hs_T\right\}$ of Hilbert spaces such that for each basic type $T$, the quantum domain of $T$ is $$\hs_T=\mathit{span}\{|d\rangle\ |\ d\in\mathcal{D}_T\}$$ where $\mathcal{D}_T$ is the classical domain of $T$ in $\mathcal{S}$, and the quantum domain of a higher type $T_1\times ...\times T_n^\prime\rightarrow T$ is defined by equation (\ref{highH});
  \item $\mathbb{I}$ interprets each operator constant $O$ of quantum type $T_1\times...\times T_m\rightarrow T^\prime_1\times...\times T^\prime_n$ and parameter type $T^{\prime\prime}_1\times...\times T^{\prime\prime}_k$ as a family $O^\mathbb{I}=\left\{O^\mathbb{I}(\overline{d})\right\}$ of linear operators from Hilbert space $\bigotimes_{i=1}^m\hs_{T_i}$ to $\bigotimes_{i=1}^n\hs_{T^\prime_i}$, where indices $\overline{d}\in\prod_{i=1}^k\mathcal{D}_{T_i^{\prime\prime}}$. (For simplicity, we abuse the notation $\mathbb{I}$ to denote the interpretations in both the classical and quantum structures.)  
\end{enumerate} \end{defn} 
 
 A valuation $\eta$ of operator variables in the quantum structure $\mathcal{Q}$ is then defined as a mapping from each operator variable $X$ of type $T_1\times ...\times T_m\rightarrow T_1^\prime\times ...\times T_n^\prime$ to a linear operator $\eta(X)$ from Hilbert space  $\bigotimes_{i=1}^m\hs_{T_i}$ to $\bigotimes_{i=1}^n\hs_{T^\prime_i}$. 
 
 \begin{defn} Let $\zeta=(\sigma,\eta)$, where $\sigma$ is a classical state in $\mathcal{S}$ and $\eta$ a valuation of operator variables in $\mathcal{Q}$. Then the semantics $\zeta(A)$ of a formal operator $A$ in $\sigma$ is inductively defined as follows: \begin{enumerate}\item If $A=c$, then $\zeta(A)=c^\mathbb{I}$;
 \item If $A=|s\rangle_q$, then $\zeta(A)=|\sigma(s)\rangle_{\sigma(q)}$ (standard basis state of the quantum system denoted by $\sigma(q)$), where $\sigma(s)$ and $\sigma(q)$ are given by Definitions \ref{def-term-value} and \ref{def-ground};
 \item If $A=\langle s|_q$, then $\zeta(A)=\langle\sigma(s)|_{\sigma(q)}$, where $\sigma(s)$ and $\sigma(q)$ are as in clause (2);
 \item If $A=X[\overline{q}\rightarrow\overline{q^\prime}]$, then $\zeta(A)=\eta(X):\hs_{\sigma(\overline{q})}\rightarrow\hs_{\sigma(\overline{q^\prime})}$; 
  \item If $A=O(\overline{t})[\overline{q}\rightarrow\overline{q^\prime}]$, then $\zeta(A)=O^\mathbb{I}(\sigma(\overline{t})):\hs_{\sigma(\overline{q})}\rightarrow\hs_{\sigma(\overline{q^\prime})}$; 
 \item If $A=cA^\prime$, then $\zeta(A)=c^\mathbb{I}\cdot\sigma(A^\prime)$;
 \item If $A=A^{\prime\dag}$, then $\zeta(A)=\zeta(A^\prime)^\dag$;
 \item If $A=A_1+A_2$, then $\zeta(A)=\zeta(A_1)+\zeta(A_2)$;
 \item If $A=A_1A_2$, then $\zeta(A)=\zeta(A_1)\cdot\zeta(A_2)$;
 \item If $A=A_1\otimes A_2$, then $\zeta(A)=\zeta(A_1)\otimes \zeta(A_2)$.
\end{enumerate} 
\end{defn}

It should be noted that the operations in the right-hand side of the defining equations of $\zeta(A)$ in clauses (6) to (10) are operations of linear operators on the appropriate Hilbert spaces. 

The following lemma shows that the signature of a formal operator $A$ in a classical state $\sigma$ indeed describes the domain and codomain of the semantics $\zeta(A)$.  
 
 \begin{lem} Let $\zeta=(\sigma,\eta)$. If $\sigma\models A:\overline{q}\rightarrow\overline{q^\prime}$, then $\zeta(A)$ is well-defined as a linear operator from $\hs_{\overline{q}}$ to $\hs_{\overline{q^\prime}}$. 
 \end{lem}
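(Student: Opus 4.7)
The plan is to proceed by induction on the derivation of the signature judgement $\sigma \models A : \overline{q} \rightarrow \overline{q^\prime}$, which, by the signing rules of Table \ref{fig-sign-rules}, is equivalent to induction on the structure of the formal operator $A$. For each rule, I would verify that the matching clause in the definition of $\zeta(A)$ indeed yields a linear operator from $\hs_{\overline{q}}$ to $\hs_{\overline{q^\prime}}$. The base cases are straightforward: for $A = c$, the interpretation $c^\mathbb{I} \in \mathcal{D}_\mathbf{C}$ is a scalar, i.e.\ an operator $\hs_\epsilon \to \hs_\epsilon$ under the convention that $\hs_\epsilon$ is the one-dimensional Hilbert space; for $|s\rangle_q$ and $\langle s|_q$ the semantics $|\sigma(s)\rangle_{\sigma(q)}$ and $\langle\sigma(s)|_{\sigma(q)}$ are the appropriate basis vectors in $\hs_{\sigma(q)}$, which match the signatures produced by Sign-Stat under the paper's Dirac-style convention on signatures.

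For the Sign-OpV and Sign-OpC rules the distinctness premises $\sigma \models \mathit{Dist}(\overline{q}) \wedge \mathit{Dist}(\overline{q^\prime})$ are essential: they guarantee that $\sigma(\overline{q})$ and $\sigma(\overline{q^\prime})$ are sequences of pairwise distinct ground quantum variables, so that the tensor product state spaces $\hs_{\sigma(\overline{q})} = \bigotimes_i \hs_{\sigma(q_i)}$ and $\hs_{\sigma(\overline{q^\prime})} = \bigotimes_j \hs_{\sigma(q_j^\prime)}$ are unambiguously defined and have the correct dimensions to match the types of $\eta(X)$ and of $O^\mathbb{I}(\sigma(\overline{t}))$. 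Without this, the assignments in clauses (4) and (5) of the semantic definition could be type-incorrect.

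The inductive steps for the constructors are then a direct matching of signing rules with semantic clauses: scalar multiplication (Sign-Scal) preserves signatures and corresponds to multiplying a linear operator by a scalar; adjoint (Sign-Diag) sends $\hs_{\overline{q}} \to \hs_{\overline{q^\prime}}$ to $\hs_{\overline{q^\prime}} \to \hs_{\overline{q}}$, matching the standard adjoint; addition (Sign-Add) uses the induction hypothesis to ensure both summands are linear operators between the same pair of Hilbert spaces; and multiplication (Sign-Mult) relies on the fact that the intermediate space $\hs_{\overline{q^\prime}}$ is the same for the codomain of $A_2$ and the domain of $A_1$, so the composition is well-defined.

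The main subtlety lies in the tensor product case (Sign-Tensor). Here I would need the disjointness premise $\overline{q_1} \cap \overline{q_2} = \overline{q_1^\prime} \cap \overline{q_2^\prime} = \emptyset$ to conclude that $\hs_{\overline{q_1},\overline{q_2}} = \hs_{\overline{q_1}} \otimes \hs_{\overline{q_2}}$ (and similarly for the primed sequences), so that the operator-level tensor product $\zeta(A_1) \otimes \zeta(A_2)$ is canonically a linear operator from $\hs_{\overline{q_1},\overline{q_2}}$ to $\hs_{\overline{q_1^\prime},\overline{q_2^\prime}}$. Combining this identification with the induction hypotheses applied to $A_1$ and $A_2$ then yields the required typing. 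I expect this disjointness bookkeeping, together with the analogous role of $\mathit{Dist}$ in Sign-OpV/Sign-OpC, to be the only non-routine part of the argument, since everything else is a mechanical unfolding of the definitions.
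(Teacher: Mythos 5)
Your proposal is correct and follows exactly the route the paper takes: the paper's proof is the one-line remark ``Routine by induction on the structure of $A$,'' and your case analysis (matching each signing rule to the corresponding semantic clause, with the distinctness and disjointness premises doing the work in the operator-symbol and tensor cases) is just that induction carried out in detail. No divergence or gap to report.
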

 \begin{proof} Routine by induction on the structure of $A$. 
 \end{proof}
 
Thus, the semantics of a formal operator $A$ can be understood as a partial function from pairs $\zeta=(\sigma,\eta)$ of classical states and valuations of operator variables to operator in $\hs_{\overline{q}}\rightarrow \hs_{\overline{q^\prime}}$ with $\sigma\models A:\overline{q}\rightarrow\overline{q^\prime}$:
$$\llbracket A\rrbracket:\zeta=(\sigma,\eta)\mapsto\zeta(A)\ \mbox{whenever}\ \zeta(A)\ \mbox{is well-defined.}$$ 
So semantically $A$ is an operator parameterised by classical variables within it.  
This understanding is consistent with the notion of cq-states and cq-predicate defined in \cite{FengY} and unitary expressions \cite{You25}. A fundamental difference between our approach and those in \cite{FengY, You25} is that we introduced the syntax of parameterised operators. It is noteworthy that the codomain $\hs_{\overline{q}}\rightarrow \hs_{\overline{q^\prime}}$ of the semantic mapping $\llbracket A \rrbracket$ depends on $\sigma$ in inputs $\zeta=(\sigma,\eta)$; in other words, $A$ should be viewed as dependently typed (in the sense of dependent type theory). 

Obviously, if $A$ does not contain any operator variable and $\zeta=(\sigma,\eta)$, then $\zeta(A)$ depends only on $\sigma$ but not $\eta$. In this case, $\zeta(A)$ is often abbreviated to $\sigma(A)$. 

  \begin{exam}Consider formal operator: $$s=\cos\frac{\theta}{2}|x-\frac{1}{2}\rangle_{q[3n-2]}+\sin\frac{\theta}{2}|x+\frac{1}{2}\rangle_{q[3n-2]}.$$ It stands for a quantum state parameterised by classical variables $\theta, x$ and $n$. Let classical state $\sigma$ be given by $\sigma(\theta)=\frac{\pi}{2}, \sigma(x)=\frac{1}{2}, \sigma(n)=3$. Then $\sigma(s)$ is the state $|+\rangle=\frac{1}{\sqrt{2}}(|0\rangle+|1\rangle)$ of qubit $q[7]$.
 \end{exam}

The substitution lemma for classical expressions (Lemma \ref{lem-classical-sub}(1)) can be lifted to formal operators:
\begin{lem}[Substitution]\label{lem-fop-sub} For any classical variable $x$ and expression $t$, for any $\zeta=(\sigma,\eta)$, and for any formal operator $A$, we have: 
\begin{equation*}\begin{split}&\zeta(A[t/x])=(\sigma[x:=\sigma(t)],\eta)(A);\\ 
&\zeta(A[B/X])=(\sigma,\eta[X:=\zeta(B)])(A) \end{split}\end{equation*} where $\eta[X:=\zeta(B)]$ stands for the updated valuation of operator variables defined by $$\eta[X:=\zeta(B)](Y)=\begin{cases}\eta(Y)\ &\mbox{if}\ Y\neq X, \\ \zeta(B)\ &\mbox{if}\ Y=X.\end{cases}$$
\end{lem}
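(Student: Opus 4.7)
The plan is a straightforward structural induction on the formal operator $A$, treating the two equations in parallel. The classical substitution lemma (Lemma \ref{lem-classical-sub}(1)) supplies the base cases whenever classical expressions or quantum-variable subscripts appear, and each constructor in the syntax of formal operators respects the induction hypothesis in a purely compositional way.

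For the first equation, I would start with the atomic cases. When $A = c$, both sides collapse to $c^{\mathbb{I}}$. When $A = |s\rangle_q$, the definition of substitution gives $A[t/x] = |s[t/x]\rangle_{q[t/x]}$, whose semantics is $|\sigma(s[t/x])\rangle_{\sigma(q[t/x])}$; invoking Lemma \ref{lem-classical-sub}(1) for the expression $s$ and (applying it componentwise to the subscripts) for the quantum variable $q$ rewrites this as $|\sigma[x:=\sigma(t)](s)\rangle_{\sigma[x:=\sigma(t)](q)}$, which is exactly the value of $A$ at the updated classical state. The cases $\langle s|_q$, $X[\overline{q}\rightarrow\overline{q^\prime}]$, and $O(\overline{s})[\overline{q}\rightarrow\overline{q^\prime}]$ follow the same pattern; throughout, the valuation $\eta$ is unchanged, so values of operator variables are preserved automatically. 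The compositional cases $cA'$, $A^{\prime\dag}$, $A_1+A_2$, $A_1 A_2$, and $A_1\otimes A_2$ close immediately under the inductive hypothesis, since the matching semantic operations on linear operators depend only compositionally on their arguments.

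For the second equation, the cases in which $A$ does not syntactically involve $X$ --- namely $A$ being $c$, $|s\rangle_q$, $\langle s|_q$, a constant-headed $O(\overline{t})[\overline{r}\rightarrow\overline{r^\prime}]$, or $Y[\overline{r}\rightarrow\overline{r^\prime}]$ with $Y\neq X$ --- are trivial, because $A[B/X]=A$ and the updated valuation $\eta[X:=\zeta(B)]$ agrees with $\eta$ on every operator variable that could actually occur in $A$. The compositional cases again go through by direct application of the induction hypothesis.

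The main obstacle --- and the only clause demanding genuine work --- is $A = X[\overline{r}\rightarrow\overline{r^\prime}]$ in the second equation. Here $A[B/X] = B[\overline{r},\overline{r^\prime}/\overline{q},\overline{q^\prime}]$, where $\mathit{Sign}(B) = \overline{q}\rightarrow\overline{q^\prime}$, while the right-hand side reduces simply to $\eta[X:=\zeta(B)](X)=\zeta(B)$, to be read as an operator on $\hs_{\sigma(\overline{r})}\rightarrow\hs_{\sigma(\overline{r^\prime})}$. Closing the gap requires an auxiliary \emph{renaming invariance} lemma: substituting the quantum variables $\overline{q},\overline{q^\prime}$ by the type-compatible $\overline{r},\overline{r^\prime}$ throughout $B$ preserves the semantic operator up to the forced relabelling of ambient Hilbert spaces. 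That auxiliary claim is itself proved by a subsidiary structural induction on $B$, exploiting the fact that in every clause of the semantics the quantum-variable subscripts are used only to designate the Hilbert spaces on which the resulting operator acts, never to select the operator itself. Once this renaming invariance is in hand, the main induction closes cleanly.
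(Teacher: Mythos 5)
Your proposal is correct and follows essentially the same route as the paper, whose proof is only the one-line remark ``by induction on the structure of $A$, using Lemmas \ref{lem-classical-sub}(1) and \ref{lem-sub-sign}.'' Your explicit identification of the operator-variable clause $A=X[\overline{r}\rightarrow\overline{r^\prime}]$ as the one requiring an auxiliary renaming-invariance argument for $B[\overline{r},\overline{r^\prime}/\overline{q},\overline{q^\prime}]$ is a genuine detail the paper leaves implicit (it handles the well-definedness side via Lemma \ref{lem-sub-sign} but never states the semantic invariance of quantum-variable renaming), and it is a worthwhile addition.
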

\begin{proof} By induction on the structure of $A$, using Lemmas \ref{lem-classical-sub}(1) and \ref{lem-sub-sign}.\end{proof} 

 \section{Functions and Predicates over Formal Operators}\label{sec-fp}
 
 We recall from Section \ref{sec-classic} that in classical first-order logic $\mathcal{L}$ a function is defined as a mapping from individuals (i.e. elements of the domains) to themselves, and a predicate is a mapping from individuals to truth values. At the layer of symbolic operator logic $\mathbf{SOL}$, operators are treated as individuals and formal operators are considered as expressions. So, in this section, we introduce some basic functions and predicates over formal operators upon which logical formulas of $\mathbf{SOL}$ will be constructed. 
 
 \subsection{Functions over Formal Operators}

 Formal operators defined in the last section are essentially syntactic representations of matrices and operators (parameterised by classical variables). Then various functions over matrices and operators used in quantum computation and information  \cite{NC00} can be naturally generalised to formal operators. In this section, we only present several examples of them to illustrate how such generalisations can be done. More functions over formal operators can be introduced in the future applications.  
 
  \begin{exam}\label{exam-norm} We can define the function of Frobenius norm $\|A\|$ for formal operators $A$. For a given first-order structure $\mathcal{S}$ as the interpretation of classical logic $\mathcal{L}$ and for a state $\sigma$ in $\mathcal{S}$, and for a quantum structure $\mathcal{Q}$ over $\mathcal{S}$ and a valuation $\eta$ of operator variables in $\mathcal{Q}$, let $\zeta=(\sigma,\eta)$. Then the semantics $\zeta(\|A\|)$ of $\|A\|$ in $\zeta$ is defined as follows:
  \begin{equation*}\zeta(\|A\|)=\|\zeta(A)\|=\sqrt{\sum_{i,j}|a_{ij}|^2}\ \mbox{if}\ \zeta(A)\ \mbox{is matrix}\ (a_{ij}).
  \end{equation*}
   \end{exam}
  
 \begin{exam} The trace $\tr(A)$ of a formal operator $A$ can be defined in the same way as in Example \ref{exam-norm}. For a given pair $\zeta=(\sigma,\eta)$ of classical state $\sigma$ and valuation $\eta$ of operator variables,  \begin{equation*}\zeta(\tr(A))=\begin{cases}\tr(\zeta(A))\ &\mbox{if}\ \sigma\models A:\overline{q}\rightarrow\overline{q}\ \mbox{for some}\ \overline{q},\\ \mbox{undefined}\ &\mbox{otherwise}.\end{cases}\end{equation*}
\end{exam}

Various properties of operators can be generalised to formal operators, for instance linearity of trace: $\tr(\lambda A+\mu B)=\lambda\tr(A)+\mu\tr(B).$

 \subsection{Predicates over Formal Operators}
 
Symbolic operator logic $\mathbf{SOL}$ is a first-order logic with operators as individuals (i.e. objects), and formal operators defined in Section \ref{sec-operators} are expressions (i.e. terms) in $\mathbf{SOL}$. In the previous subsection, we show how to define functions over formal operators. In this subsection, we further illustrate how predicates over formal operators can be defined as atomic propositions in $\mathbf{SOL}$ by presenting some simple examples. 
 
\begin{exam}[Quantum states, Unitary Transformations and Observables] We introduce unary predicates $\mathcal{S}_p$, $\mathcal{S}_m$, $\mathcal{U}$ and $\mathcal{O}$ over formal operators for pure quantum states, mixed quantum states, unitary transformations and observables, respectively. Their semantics are defined as follows. Given a classical structure $\mathcal{S}$ and a quantum structure $\mathcal{Q}$ over it. Let $\sigma$ be a classical state in $\mathcal{S}$ and $\eta$ a valuation of operator variables in $\mathcal{Q}$, and $\zeta=(\sigma,\eta)$, let $A$ be a formal operator, and let $\overline{q}$ be a sequence of simple or subscripted quantum variables.  
 \begin{enumerate}\item The judgement $\zeta\models\mathcal{S}_p(A):\overline{q}$ means that in the context $\zeta$, $A$ denotes a pure state of quantum variables $\overline{q}$; that is, $\sigma\models\mathit{Dist}(\overline{q})$, $\sigma\models A:\sigma(\overline{q})\rightarrow\epsilon$ and $\zeta(A)$ is a unit vector in Hilbert space $\hs_{\sigma(\overline{q})}$.  
 
 \item The judgement $\zeta\models\mathcal{S}_m(A):\overline{q}$ means that in the context $\zeta$, $A$ denotes a mixed state of quantum variables $\overline{q}$; that is, $\sigma\models\mathit{Dist}(\overline{q})$, $\sigma\models A:\sigma(\overline{q})\rightarrow\sigma(\overline{q})$ and $\zeta(A)$ is a density operator (i.e. a positive operator with trace $1$) on Hilbert space $\hs_{\sigma(\overline{q})}$.  
 
 \item The judgement $\zeta\models\mathcal{U}(A):\overline{q}$ means that in the context $\zeta$, $A$ denotes a unitary transformation on quantum variables $\overline{q}$; that is, $\sigma\models\mathit{Dist}(\overline{q})$, $\sigma\models A:\sigma(\overline{q})\rightarrow\sigma(\overline{q})$ and $\zeta(A)$ is a unitary operator on Hilbert space $\hs_{\sigma(\overline{q})}$.  
 
 \item The judgement $\zeta\models\mathcal{O}(A):\overline{q}$ means that in the context $\zeta$, $A$ denotes an observable on quantum variables $\overline{q}$; that is, $\sigma\models\mathit{Dist}(\overline{q})$, $\sigma\models A:\sigma(\overline{q})\rightarrow\sigma(\overline{q})$ and $\zeta(A)$ is a Hermitian operator on Hilbert space $\hs_{\sigma(\overline{q})}$.  
\end{enumerate}
 \end{exam}
 
The predicates over formal operators defined in the above example can be viewed from a different perspective. As discussed in the introduction, the framework is deliberately designed to treat all formal operators as a single class of syntactic entities within our logical system for specifying quantum data and operations. However, the preceding example refines this framework by explicitly distinguishing among pure and mixed quantum states, unitary transformations, and observables -- an approach that is often required in practical applications.

\begin{exam}We can introduced binary predicates $=$ (equality) and $\sqsubseteq$ (L\"{o}wner order) over formal operators. For any two formal operators $A$ and $B$, and for any pair $\zeta=(\sigma,\eta)$ of classical state $\sigma$ and valuation $\eta$ of operator variables:\begin{enumerate}\item The judgement $\zeta\models A=B:\overline{q}\rightarrow\overline{q^\prime}$ is true if $\sigma\models\mathit{Dist}(\overline{q})\wedge\mathit{Dist}(\overline{q^\prime})$, $\sigma\models A:\sigma(\overline{q})\rightarrow\sigma(\overline{q^\prime})$, $\sigma\models B:\sigma(\overline{q})\rightarrow\sigma(\overline{q^\prime})$ and $\zeta(A)=\zeta(B)$. 
\item The judgement $\zeta\models A\sqsubseteq B:\overline{q}$ is true if $\sigma\models\mathit{Dist}(\overline{q})$, $\sigma\models A:\sigma(\overline{q})\rightarrow\sigma(\overline{q})$, $\sigma\models B:\sigma(\overline{q})\rightarrow\sigma(\overline{q})$ and $\zeta(B)-\zeta(A)$ is a positive operator.
\end{enumerate}
\end{exam}

The predicate symbols over operators introduced in the above two examples are some of the mostly basic ones. More predicate symbols can be introduced for different applications.
   
\section{Symbolic Operator Logic}\label{sec-sol}

Formal operators are defined as expressions (i.e. terms of individuals) in symbolic operator logic $\mathbf{SOL}$ in Section \ref{sec-operators}. Upon it, we are now ready to define the syntax and semantics of the logic system $\mathbf{SOL}$ as well as the notion of entailment in it. 

\subsection{Syntax}
With the basic functions and predicates presented in Section \ref{sec-fp}, we can define atomic propositions in $\mathbf{SOL}$. From these, more complex logical formulas in $\mathbf{SOL}$ can be constructed using logical connectives and quantifiers.

\begin{defn}[Syntax] Logical formulas over operators (i.e. well-formed formulas in $\mathbf{SOL}$) are defined by the following syntax:
\begin{align}\label{logic1}\mathcal{A}: =\ &\|A\|\propto\lambda\ |\ \tr(A)\propto\lambda\ |\ ...\\ 
 \label{logic2}&|\ \mathcal{S}_p(A):\overline{q}\ |\ \mathcal{S}_m(A):\overline{q}\ |\ \mathcal{U}(A):\overline{q}\ |\ \mathcal{O}(A):\overline{q}\ |\ \cdots \\ \label{logic3}& |\ A_1=A_2 \ | \ A_1\sqsubseteq A_2\ |\ \cdots \\ \label{logic4}& |\ \neg\mathcal{A}\ |\ \mathcal{A}_1\wedge\mathcal{A}_2\\ \label{logic5} & |\ (\forall x)\mathcal{A}\ |\ (\forall X)\mathcal{A} \end{align} where $\lambda$ is a constant, $\propto\in\{=,<,>\}$, $x$ is a classical variable and $X$ an operator variable. 
\end{defn}

Some basic atomic propositions are given in (\ref{logic1}) - (\ref{logic3}). Additional atomic propositions may be introduced in future applications. The propositional connectives $\neg$ (negation) and $\wedge$ (conjunction) are introduced in (\ref{logic4}). The universal quantifiers $(\forall x)$ over classical variables and $(\forall X)$ over operator variables are introduced in (\ref{logic5}).   
Other logical connectives -- $\vee$ (disjunction), $\rightarrow$ (implication), $\leftrightarrow$ (bi-implication)  -- as well as the existential quantifier $\exists$ can be defined in terms of $\neg, \vee$, and $\forall$ in the familiar way. We write $\mathit{Wff}_\mathbf{SOL}$ for the set of all Logical formulas over operators. 

Let us present several simple examples to show how the syntax of symbolic operator logic $\mathbf{SOL}$ defined above can be conveniently used to specify various properties of quantum data and operations.  

 \begin{exam}[Effects = Quantum Predicates] Quantum predicates were introduced in \cite{DP06} as the preconditions and postconditions in quantum Hoare triples (i.e. correctness formulas for quantum programs). They are defined as Hermitian operators (i.e. observables) between zero operator $0$ and identity operator $I$. Indeed, they are called effects in the quantum foundations literature. Formally, they can be characterised in the language of $\mathbf{SOL}$ as follows: $$\mathcal{P}(A):\overline{q}::=\mathcal{O}(A):\overline{q}\wedge 0\sqsubseteq A\sqsubseteq I.$$ Here, $\mathcal{P}(A)$ stands for the statement that $A$ is a quantum predicate over quantum variables $\overline{q}$. 
  \end{exam}
  
  \begin{exam}[No-Cloning Theorem]\label{no-clone} No-cloning theorem asserts that there is no copying machine that can create an independent and identical copy of an arbitrary unknown quantum state (see for example \cite{NC00}, Section 1.3.5). Let $q_1,q_2$ be two quantum variables with the same type. Then no-cloning theorem can be formally stated in $\mathbf{SOL}$ as follows:
  \begin{equation}\neg(\exists U)\left[\mathcal{U}(U):q_1,q_2\wedge (\forall |\psi\rangle)\left(\mathcal{S}_p(|\psi\rangle):q_1\rightarrow U(|\psi\rangle\otimes|0\rangle)=|\psi\rangle\otimes|\psi\rangle\right)\right].
  \end{equation}
\end{exam}

The substitutions $A[\overline{t}/\overline{x}]$ and $A[\overline{B}/\overline{X}]$ for formal operators $A$ defined in Subsection \ref{sec-fop-sub} can be easily extended to substitutions $\mathcal{A}[\overline{t}/\overline{x}]$ and $\mathcal{A}[\overline{B}/\overline{X}]$ for logical formulas $\mathcal{A}\in\mathit{Wff}_\mathbf{SOL}$ by employing the familiar technique of $\alpha$-conversion (see clause (2) of Definition \ref{def-c-sub}.  

\subsection{Semantics}

To define the semantics of $\mathbf{SOL}$, we assume that a classical first-order structure $\mathcal{S}$ and a quantum structure $\mathcal{Q}$ over it are given.
 
\begin{defn}[Semantics] Let $\zeta=(\sigma,\eta)$, where $\sigma$ is a classical state in $\mathcal{S}$ and $\eta$ a valuation of operator variables in $\mathcal{Q}$, and let $A\in\mathit{Wff}_\mathbf{SOL}$ be a logical formula over operators. Then satisfaction $\zeta\models \mathcal{A}$ is defined by induction on the structure of $ \mathcal{A}$ as follows: 
\begin{enumerate}\item If $ \mathcal{A}$ is an atomic formula, then $\zeta\models  \mathcal{A}$ was defined in Section \ref{sec-fp};
\item $\zeta\models \neg\mathcal{A}$ iff it is not true that $\zeta\models\mathcal{A}$;
\item $\zeta\models \mathcal{A}_1\wedge \mathcal{A}_2$ iff $\zeta\models \mathcal{A}_1$ and $\zeta\models \mathcal{A}_2$;
\item $\zeta\models(\forall x) \mathcal{A}$ iff $(\sigma[x:=d],\eta)\models \mathcal{A}$ for any value $d$ of the same type as $x$; and
\item $\zeta\models (\forall X)\mathcal{A}$ iff $(\sigma,\eta[X:=A])\models  \mathcal{A}$ for any operator $A$ of the same type of $X$.  
\end{enumerate}
\end{defn}

Let $\Gamma\subseteq\mathit{Wff}_\mathbf{SOL}$ be a logical formulas over operators. Then by $\zeta\models\Gamma$, we mean that $\zeta\models\mathcal{A}$ for every $\mathcal{A}\in\Gamma$. 

The substitution lemma for classical first-order logic $\mathcal{L}$ (Lemma \ref{lem-classical-sub}) can be lifted to symbolic operator logic $\mathbf{SOL}$. 

\begin{lem}[Substitution]\label{sol-sub} For any logical formula $\mathcal{A}\in\mathit{Wff}_\mathbf{SOL}$ and for any $\zeta=(\sigma,\eta)$ of classical state $\sigma$ and valuation $\eta$ of operator variables, we have: 
\begin{equation}\zeta\models\mathcal{A}[\overline{t}/\overline{x}][\overline{A}/\overline{X}]\ \mbox{iff}\ (\sigma[\overline{x}:=\sigma(\overline{t}],\eta[\overline{X}:=\zeta(\overline{A})])\models\mathcal{A}.
\end{equation}
\end{lem}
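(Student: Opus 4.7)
The plan is a routine structural induction on $\mathcal{A}$, with the heavy lifting done by the two lemmas already established for the term layer, namely Lemma \ref{lem-classical-sub} (substitution for classical expressions) and Lemma \ref{lem-fop-sub} (substitution for formal operators). Write $\zeta' = (\sigma[\overline{x}:=\sigma(\overline{t})],\eta[\overline{X}:=\zeta(\overline{A})])$ throughout, so that the goal is $\zeta\models\mathcal{A}[\overline{t}/\overline{x}][\overline{A}/\overline{X}]$ iff $\zeta'\models\mathcal{A}$. I would first check that the simultaneous substitution is well-defined and commutes as expected, i.e.\ that substituting $\overline{t}/\overline{x}$ on classical terms inside $\overline{A}$ does not disturb the operator-variable substitution, since $\overline{X}$ ranges over a disjoint syntactic category.

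For the atomic cases (\ref{logic1})--(\ref{logic3}), the semantic clauses reduce the satisfaction judgement to statements about $\zeta(A)$, $\sigma(\overline{q})$, and predicates like $\mathit{Dist}(\overline{q})$, $\sigma\models A:\sigma(\overline{q})\rightarrow\sigma(\overline{q}')$, $\zeta(A)=\zeta(B)$, etc. Each of these is handled by combining Lemma \ref{lem-fop-sub} (to rewrite $\zeta(A[\overline{t}/\overline{x}][\overline{A}/\overline{X}])$ as $\zeta'(A)$), Lemma \ref{lem-sub-sign} (to transport well-signedness through the substitution), and Lemma \ref{lem-classical-sub}(1) (to handle classical expressions such as $\lambda$ or the subscripts in $\overline{q}$). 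For example, for $A_1 = A_2$ one just notes $\zeta((A_i)[\overline{t}/\overline{x}][\overline{A}/\overline{X}]) = \zeta'(A_i)$ by Lemma \ref{lem-fop-sub}, so the two equality judgements coincide.

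The propositional cases (\ref{logic4}) are immediate from the induction hypothesis. For the quantifier cases (\ref{logic5}) I would rely on $\alpha$-conversion, exactly as in clause (2c) of Definition \ref{def-c-sub}: when computing $((\forall y)\mathcal{B})[\overline{t}/\overline{x}][\overline{A}/\overline{X}]$ we may assume the bound variable $y$ is fresh for $\overline{x}$, $\overline{X}$, $\mathit{var}(\overline{t})$, and the free operator-variable occurrences of $\overline{A}$, so that the substitution slides past the quantifier. Then for $(\forall x')\mathcal{B}$ over a classical variable, one uses $\sigma[x':=d][\overline{x}:=\sigma(\overline{t})] = \sigma[\overline{x}:=\sigma(\overline{t})][x':=d]$ (valid by freshness) together with the induction hypothesis, and symmetrically for $(\forall X')\mathcal{B}$ over an operator variable, using that $\zeta(\overline{A})$ is unchanged when $\eta$ is updated at the fresh $X'$.

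The main obstacle I anticipate is purely bookkeeping: carefully stating the freshness side conditions on $\overline{x}$, $\overline{X}$, $\overline{t}$, $\overline{A}$ that justify the $\alpha$-conversion step and the commutation of the two update operations $\sigma[x':=d]$ and $\sigma[\overline{x}:=\sigma(\overline{t})]$. Once these are laid out, every inductive step is a mechanical unfolding. It is also worth highlighting in the proof that the two substitutions $[\overline{t}/\overline{x}]$ and $[\overline{A}/\overline{X}]$ genuinely commute on $\mathcal{A}$, so that the statement does not depend on the order in which they are applied; this follows because classical substitution touches only the term layer of $\overline{A}$, and should be recorded as a small preliminary lemma before launching the induction.
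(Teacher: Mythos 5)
Your proposal follows exactly the paper's route: a structural induction on $\mathcal{A}$ whose atomic cases are discharged by Lemma \ref{lem-fop-sub} (with Lemmas \ref{lem-classical-sub} and \ref{lem-sub-sign} supporting the term layer), and whose propositional and quantifier cases are routine. The paper states this in one line; your fleshed-out treatment of the $\alpha$-conversion side conditions and the commutation of the two substitutions is a faithful and correct elaboration of the same argument.
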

\begin{proof} By induction on the structure of $\mathcal{A}$, using Lemma \ref{lem-fop-sub}. 
\end{proof}

\subsection{Entailment}

Based on the semantics of $\mathbf{SOL}$, we can introduce the notion of entailment from a set $\Gamma$ of logical formulas in $\mathbf{SOL}$ to a logical formula $\mathcal{A}$ in $\mathbf{SOL}$ with a set $\Sigma$ of classical first-order logical formulas as an assumption.  

\begin{defn}[Entailment] Let $\Sigma\subseteq\mathit{Wff}_\mathcal{L}$ be a set of classical first-order logical formulas, $\Gamma\subseteq \mathit{Wff}_\mathbf{SOL}$ a set of logical formulas over operators, and $\mathcal{A}\in \mathit{Wff}_\mathbf{SOL}$. Then we say that $\Sigma$ and $\Gamma$ entail $\mathcal{A}$, written \begin{equation}\label{entail}\Sigma,\Gamma\models\mathcal{A}\end{equation} if for any classical structure $\mathcal{S}$ and any quantum structure $\mathcal{Q}$ over it, for all classical state $\sigma$ in $\mathcal{S}$, and for all valuations $\eta$ of operator variables in $\mathcal{Q}$, whenever $\sigma\models\Sigma$ and $(\sigma,\eta)\models \Gamma$, then we have $\sigma\models\mathcal{A}$. 
\end{defn}

If particular, if $\Sigma=\emptyset$, then (\ref{entail}) can be abbreviated to $\Gamma\models\mathcal{A}$; if $\Gamma=\emptyset$, then (\ref{entail}) can be abbreviated to $\Sigma\models \mathcal{A}$, and $\Sigma=\Gamma=\emptyset$, then we say that $\mathcal{A}$ is logically valid and simply write $\models\mathcal{A}$. 

\begin{exam}[Block sphere visualisation of a qubit \cite{NC00}, page 15]
\begin{align*}|\alpha|^2+|\beta|^2=1\models (\exists\theta,\varphi,\gamma)\left(\alpha|0\rangle+\beta|1\rangle=e^{i\gamma}\left(\cos\frac{\theta}{2}|0\rangle+e^{i\varphi}\sin\frac{\theta}{2}|1\rangle\right)\right)
\end{align*}
\end{exam}

Various valid logical formulas in classical first-order logic can be generalised straightforwardly to symbolic operator logic $\mathbf{SOL}$; for example, 
\begin{enumerate}\item $\models (\forall x)\mathcal{A}\rightarrow \mathcal{A}[t/x]$ for any classical variable $x$ and classical expression $t$; 
\item $\models (\forall X)\mathcal{A}\rightarrow \mathcal{A}[B/X]$ for any operator variable $X$ and formal operator formula $B$;
\item $\models (\forall x)(\mathcal{A}\rightarrow\mathcal{B})\rightarrow (\mathcal{A}\rightarrow (\forall x)\mathcal{B})$ if classical variable $x$ is free in $\mathcal{A}$;  
\item $\models (\forall X)(\mathcal{A}\rightarrow\mathcal{B})\rightarrow (\mathcal{A}\rightarrow (\forall X)\mathcal{B})$ if operator variable $X$ is free in $\mathcal{A}$. 
\end{enumerate}
The inference rules in classical first-order logic, e.g. modus ponens and generalisation, can be extended to $\mathbf{SOL}$. The validity of these logical formulas and the soundness of these inference rules can be easily checked by definition. Furthermore, we have:  
\begin{thm}[Deduction theorem] For any $\Sigma\subseteq\mathit{Wff}_\mathcal{L}$ and $\Gamma\subseteq\mathit{Wff}_\mathbf{SOL}$, and for any $\mathcal{A},\mathcal{B}\in\mathit{Wff}_\mathbf{SOL}$, it holds that  
$$\Sigma,\Gamma\cup\{\mathcal{A}\}\models\mathcal{B}\ \mbox{iff}\ \Sigma,\Gamma\models\mathcal{A}\rightarrow\mathcal{B}.$$
\end{thm}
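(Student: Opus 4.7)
The plan is to prove both directions of the equivalence by unwinding the definition of $\models$ for $\mathbf{SOL}$-entailment and reducing to the truth-conditional semantics of $\rightarrow$. First, I would record a small preliminary observation: since the paper defines $\rightarrow$ from $\neg, \vee, \forall$ ``in the familiar way,'' clauses (2) and (3) of the semantics give, for every context $\zeta=(\sigma,\eta)$,
\begin{equation*}
\zeta\models\mathcal{A}\rightarrow\mathcal{B}\ \text{iff}\ \bigl(\zeta\models\mathcal{A}\ \text{implies}\ \zeta\models\mathcal{B}\bigr).
\end{equation*}
This is the only step that uses the propositional semantics, and once it is in hand the rest is purely a matter of re-quantifying over contexts.

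For the forward direction, assume $\Sigma,\Gamma\cup\{\mathcal{A}\}\models\mathcal{B}$. I would fix an arbitrary classical structure $\mathcal{S}$, quantum structure $\mathcal{Q}$ over it, classical state $\sigma$ with $\sigma\models\Sigma$, and valuation $\eta$ with $(\sigma,\eta)\models\Gamma$; the goal is $(\sigma,\eta)\models\mathcal{A}\rightarrow\mathcal{B}$. Case-split on whether $(\sigma,\eta)\models\mathcal{A}$. If not, the preliminary observation gives $(\sigma,\eta)\models\mathcal{A}\rightarrow\mathcal{B}$ immediately. If yes, then $(\sigma,\eta)\models\Gamma\cup\{\mathcal{A}\}$, so the hypothesis yields $(\sigma,\eta)\models\mathcal{B}$, and the preliminary observation again concludes. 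Since $\mathcal{S},\mathcal{Q},\sigma,\eta$ were arbitrary, $\Sigma,\Gamma\models\mathcal{A}\rightarrow\mathcal{B}$.

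For the backward direction, assume $\Sigma,\Gamma\models\mathcal{A}\rightarrow\mathcal{B}$ and fix $\mathcal{S},\mathcal{Q},\sigma,\eta$ with $\sigma\models\Sigma$ and $(\sigma,\eta)\models\Gamma\cup\{\mathcal{A}\}$. In particular $(\sigma,\eta)\models\Gamma$, so by hypothesis $(\sigma,\eta)\models\mathcal{A}\rightarrow\mathcal{B}$; combined with $(\sigma,\eta)\models\mathcal{A}$, the preliminary observation gives $(\sigma,\eta)\models\mathcal{B}$, as desired.

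I expect no serious obstacle here: this is the standard semantic deduction theorem, and the two-sorted character of $\mathbf{SOL}$ (classical states $\sigma$ plus operator-variable valuations $\eta$) does not interfere, because neither direction introduces or binds any new classical or operator variable and the Substitution Lemma~\ref{sol-sub} is not required. The only real care-point is the preliminary observation about $\rightarrow$, which depends on how the paper chooses to define $\rightarrow$ from the primitive connectives; this is a one-line unfolding, but it is the pivot on which the whole argument turns and is worth stating explicitly so that the two directions become symmetric moves of adding/removing a case analysis on whether $(\sigma,\eta)\models\mathcal{A}$.
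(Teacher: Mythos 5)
Your proposal is correct and is exactly the argument the paper intends: the paper's proof is simply ``Easy by the definition of entailment,'' and your write-up is the standard unwinding of that definition together with the truth-conditional semantics of $\rightarrow$. No substantive difference in approach.
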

\begin{proof} Easy by the definition of entailment.\end{proof}

\begin{thm}[Substitution]\label{thm-sub} If $\Sigma,\Gamma\models\mathcal{A},$ then for any classical variables $\overline{x}$, operator variables $\overline{X}$, classical expressions $\overline{t}$ and formal operator formulas $\overline{B}$, we have: $$\Sigma[\overline{t}/\overline{x}],\Gamma[\overline{t}/\overline{x}][\overline{B}/\overline{X}]\models\mathcal{A}[\overline{t}/\overline{x}][\overline{B}/\overline{X}].$$
\end{thm}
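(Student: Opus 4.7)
The plan is to unfold the definition of entailment on both sides and reduce everything to the two substitution lemmas already available: Lemma \ref{lem-classical-sub} for classical first-order formulas in $\mathcal{L}$ and Lemma \ref{sol-sub} for $\mathbf{SOL}$ formulas. Fix an arbitrary classical structure $\mathcal{S}$ and quantum structure $\mathcal{Q}$ over it, and suppose $\sigma\models\Sigma[\overline{t}/\overline{x}]$ and $(\sigma,\eta)\models\Gamma[\overline{t}/\overline{x}][\overline{B}/\overline{X}]$ for some classical state $\sigma\in\mathit{Stat}_\mathcal{S}$ and some valuation $\eta$ of operator variables in $\mathcal{Q}$. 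The goal is to show $(\sigma,\eta)\models\mathcal{A}[\overline{t}/\overline{x}][\overline{B}/\overline{X}]$.

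First I would set $\zeta=(\sigma,\eta)$ and introduce the pushed-forward state and valuation
$\sigma'=\sigma[\overline{x}:=\sigma(\overline{t})]$ and $\eta'=\eta[\overline{X}:=\zeta(\overline{B})]$.
For each classical formula $\varphi\in\Sigma$, which contains no operator variables, Lemma \ref{lem-classical-sub}(2) gives $\sigma\models\varphi[\overline{t}/\overline{x}]$ iff $\sigma'\models\varphi$; applying this pointwise to the hypothesis on $\Sigma[\overline{t}/\overline{x}]$ yields $\sigma'\models\Sigma$. For each formula $\mathcal{C}\in\Gamma$, Lemma \ref{sol-sub} gives $\zeta\models\mathcal{C}[\overline{t}/\overline{x}][\overline{B}/\overline{X}]$ iff $(\sigma',\eta')\models\mathcal{C}$; applying this pointwise to the hypothesis on $\Gamma[\overline{t}/\overline{x}][\overline{B}/\overline{X}]$ yields $(\sigma',\eta')\models\Gamma$.

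Now I would invoke the hypothesis $\Sigma,\Gamma\models\mathcal{A}$, instantiated at the structure $\mathcal{S}$, quantum structure $\mathcal{Q}$, classical state $\sigma'$, and valuation $\eta'$: this yields $(\sigma',\eta')\models\mathcal{A}$. A second application of Lemma \ref{sol-sub}, this time read in the reverse direction, gives $\zeta\models\mathcal{A}[\overline{t}/\overline{x}][\overline{B}/\overline{X}]$, which is exactly what was to be proved. Since $\mathcal{S},\mathcal{Q},\sigma,\eta$ were arbitrary, this establishes the entailment $\Sigma[\overline{t}/\overline{x}],\Gamma[\overline{t}/\overline{x}][\overline{B}/\overline{X}]\models\mathcal{A}[\overline{t}/\overline{x}][\overline{B}/\overline{X}]$.

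The only non-routine aspect is the interaction between the two flavours of substitution. In particular, the formal operators in $\overline{B}$ may contain free occurrences of the classical variables $\overline{x}$, and the two-step substitution $[\overline{t}/\overline{x}][\overline{B}/\overline{X}]$ inserts $\overline{B}$ \emph{after} the classical substitution has been carried out, so the $\overline{x}$'s appearing inside $\overline{B}$ are evaluated against $\sigma$ rather than against $\sigma'$. This is precisely the behaviour recorded in Lemma \ref{sol-sub}, where $\zeta(\overline{B})$ is computed in the original $\zeta$ before the updates to $\sigma$ and $\eta$ are performed; hence no extra bookkeeping is needed beyond the standard $\alpha$-renaming of bound variables implicit in the definition of substitution. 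The main obstacle I anticipate is therefore purely notational, namely keeping the simultaneity of the substitutions straight when $\overline{x}$ and $\overline{X}$ share occurrences inside $\overline{t}$, $\overline{B}$, $\Sigma$, and $\Gamma$; once Lemma \ref{sol-sub} is stated and proved for simultaneous substitutions of both kinds, the present theorem becomes an almost immediate corollary.
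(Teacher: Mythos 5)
Your proof is correct and follows exactly the route the paper intends: the paper's own proof is the one-line remark ``Immediate from Lemma \ref{sol-sub}'', and your argument is precisely the expansion of that remark, transferring satisfaction between $\zeta=(\sigma,\eta)$ and the updated pair $(\sigma[\overline{x}:=\sigma(\overline{t})],\eta[\overline{X}:=\zeta(\overline{B})])$ via the substitution lemmas and then invoking the hypothesis at the updated pair. Your closing observation about the order of evaluation of $\overline{B}$ against $\sigma$ rather than $\sigma'$ is a useful clarification that the paper leaves implicit.
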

\begin{proof} Immediate from Lemma \ref{sol-sub}.\end{proof}

\section{Symbolic Reasoning about Quantum Data and Operations}\label{sec-exam1}

In this section, we present a series of examples to show how the logical mechanism developed in the previous section can be employed for symbolic reasoning about quantum data and operations. In some of these examples, In some examples, an entailment $\Sigma,\Gamma\models\mathcal{A}$ will be represented by the following diagram:
$$\frac{\Sigma\quad \Gamma}{\mathcal{A}}$$

\subsection{Basic Properties of Quantum Data and Operations}

Let us first show how some basic properties of quantum data and operations can be specified and derived in the symbolic operator logic $\mathbf{SOL}$.

\begin{exam}The formulations of some basic axioms for quantum states, unitary transformations and observables in the logic $\mathbf{SOL}$ are presented in Table \ref{fig-inf-rules}.
 \begin{table}[t]
\begin{equation*}\begin{split}
&\mbox{(Stat1)}\ \models\mathcal{U}(A_1)\wedge\mathcal{S}_p(A_2)\rightarrow \mathcal{S}_p(A_1A_2)\quad\mbox{(Stat2)}\ \models \mathcal{U}(A_1)\wedge\mathcal{S}_m(A_2)\rightarrow\mathcal{S}_m(A_1A_2A_1^\dag)\\
&\mbox{(Stat3)}\ \models 0\sqsubseteq A\wedge \tr(A)=1\rightarrow\mathcal{S}_m(A)\\ 
&\mbox{(Uni1)}\ \models\mathcal{U}(A)\rightarrow \mathcal{U}(A^\dag)\qquad\qquad\qquad\ \ \ \ \mbox{(Uni2)}\ \models\mathcal{U}(A_1)\wedge\mathcal{U}(A_2)\rightarrow\mathcal{U}(A_1A_2)\\  &\mbox{(Obs1)}\ \models\mathcal{O}(A)\rightarrow A=A^\dag\qquad\qquad\qquad\ \ \ \ \mbox{(Obs2)}\ \models\mathcal{U}(A_1)\wedge\mathcal{O}(A_2)\rightarrow\mathcal{O}(A_1A_2A_1^\dag)\\
&\mbox{(Obs3)}\ \models\mathcal{O}(A_1)\wedge\mathcal{O}(A_2)\wedge A_1A_2=A_2A_1\rightarrow\mathcal{O}(A_1A_2)
\end{split}\end{equation*}
\caption{Axioms for Quantum States, Unitary Transformations and Observables.}\label{fig-inf-rules}
\end{table}
\end{exam}

\begin{exam}The following are the formalisations of the definitions of unitary transformations and observables in $\mathbf{SOL}$: \begin{align}\label{def-uni}\bigwedge_{i=1}^n\left(\sum_{l=1}^n|a_{il}|^2=1\right)\wedge\bigwedge_{1\leq i<j\leq n}\left(\sum_{l=1}^na_{il}a_{jl}^\ast=0\right)&\models\mathcal{U}\left(\sum_{i,j=1}^na_{ij}|i\rangle\langle j|\right)\\
\label{def-obs}\bigwedge_{i,j=1}^n\left(a_{ij}=a_{ji}^\ast\right) &\models \mathcal{O}\left(\sum_{i,j=1}^na_{ij}|i\rangle\langle j|\right).
\end{align} It is worth noting that the left-hand sides of (\ref{def-uni}) and (\ref{def-obs}) are classical first-order logical formulas, whereas the right-hand sides are logical formulas in $\mathbf{SOL}$. Since it holds that $$\mbox{The theory of trigonometric functions}\models |\cos\frac{\theta}{2}|^2+|\sin\frac{\theta}{2}|^2=1$$ we can derive the unitarity of rotations about axes $x, y$ and $z$: $$\models\mathcal{U}(R_x(\theta))\wedge \mathcal{U}(R_y(\theta))\wedge \mathcal{U}(R_z(\theta))$$ from (\ref{def-uni}). Similarly, we can also derive the unitarity of quantum Fourier transform in $\mathbf{SOL}$:
$\models\mathcal{U}(\mbox{QFT}).$
\end{exam}

\subsection{(Conditional) Equational Reasoning}

Since $\mathbf{SOL}$ is a logic with equality, equational reasoning about operators can be realised in it. In particular, whenever the involved  operators carry some classical variables as their parameters, such equational reasoning can be conducted conditional on a classical first-order theory about these variables. The following are some simple examples:  

\begin{exam}\begin{enumerate}\item Linearity: $\models\alpha\left(\sum_i A_i\right)=\sum_i \alpha A_i.$

\item Equality of quantum states: $\models\bigwedge_{i<j} \left(s_i\neq s_j\right)\wedge \left(\sum_i\alpha|s_i\rangle=\sum_i\beta_i|s_i\rangle\right)\rightarrow\bigwedge_i\left(\alpha_i=\beta_i\right).$

\item Coefficient Addition: \begin{align*}\frac{s_1=s_2\wedge q_1=q_2}{\alpha_1|s_1\rangle_{q_1}+\alpha_2|s_2\rangle_{q_2}=(\alpha_1+\alpha_2)|s_1\rangle_{q_1}}
\end{align*}

\item Self Outer-Product: \begin{align*}\frac{s_2=s_3\wedge q_2=q_3}{\left(|s_1\rangle_{q_1}\left(|s_2\rangle_{q_2}^\dag\right)\right)|s_3\rangle_{q_3}=|s_1\rangle_{q_1}}
\end{align*}

\item Identity: \begin{align*}\frac{\dim\overline{q}=m\quad (\forall 1\leq i<j\leq m)(\overline{s}_i\neq \overline{s}_j)}{\sum_{i=1}^m|\overline{s}_i\rangle_{\overline{q}}\langle \overline{s}_i|_{\overline{q}}=I[\overline{q}]}
\end{align*}

\item Matrix Representation: 
$$\frac{A:\overline{q}\rightarrow\overline{q^\prime}\quad \dim\overline{q}=m\quad\dim\overline{q^\prime}=n\quad \overline{s}_i\neq\overline{s}_j\ (1\leq i< j\leq m)\quad \overline{s^\prime}_k\neq\overline{s^\prime}_l\ (1\leq k< l\leq n)}{A=\sum_{i=1}^m\sum_{k=1}^n\langle \overline{s}_i|A|\overline{s^\prime}_k\rangle |\overline{s}_i\rangle_{\overline{q}}\langle\overline{s^\prime}_k|_{\overline{q^\prime}}}$$
\end{enumerate}
\end{exam}

The common practice of labelled Dirac notations in quantum physics and quantum information theory are formalised in \cite{Zhou23} for convenient handling of multiple quantum variables in the verification of quantum programs.
Furthermore, a series of rewriting rules for labelled Dirac notations are collected and implemented in \cite{Xu-25, Xu-25a}. It is obvious that labelled Dirac notations can be easily recasted using the data structure of quantum arrays introduced in Section \ref{sec-arrays}. Thus, the rewriting rules in \cite{Xu-25, Xu-25a} can be generalised into the framework of $\mathbf{SOL}$. What's more provided by $\mathbf{SOL}$ than labelled Dirac notation \cite{Zhou23,Xu-25,Xu-25a} is that one can do address arithmetic for quantum registers or memories and reasoning about it conditional on certain assumptions about the classical variables involved in it expressed in the classical first-order logical language; see (\ref{eq-address}) for a simple example. 

\subsection{(Conditional) Inequational Reasoning}

Since the logic $\mathbf{SOL}$ is also equipped with binary predicate $\sqsubseteq$, inequational reasoning about symbolic operators can also be conveniently carried out in it, possibly conditional on certain assumptions about classical parameters in the involved operators. Here is a simple example:  

\begin{exam} Probabilistic Combination: \begin{align*}\frac{(\forall i)\left(0\leq p_i\leq p_i^\prime\wedge 0\leq A_i\sqsubseteq A_i^\prime\right)}{\sum_ip_iA_i\sqsubseteq\sum_ip_i^\prime A_i^\prime}
\end{align*}
\end{exam}

Some basic techniques for equational and inequational reasoning about symbolic operators are given in the following two propositions: 

\begin{prop} \begin{enumerate}
\item Anti-symmetry and transitivity: \begin{enumerate}\item $\Sigma, \Gamma\models A\sqsubseteq B$ and $\Sigma, \Gamma\models B\sqsubseteq A$ if and only if $\Sigma, \Gamma\models A\equiv B$;
\item If $\Sigma, \Gamma\models A\sqsubseteq B$ and $\Sigma, \Gamma\models B\sqsubseteq C$, then $\Sigma, \Gamma\models A\sqsubseteq C.$
\end{enumerate}
\item Inequality preserved by operations: If $\Sigma, \Gamma\models A\sqsubseteq B$, then: \begin{enumerate}\item  $\Sigma, \Gamma\models\neg cA\sqsubseteq \neg cB$ if $c\geq 0$ and $\Sigma, \Gamma\models cB\sqsubseteq cA$ 
if $c\leq 0$;
\item $\Sigma, \Gamma\models A^\dag\sqsubseteq B^\dag$; 
\item $\Sigma, \Gamma\models A+ C\sqsubseteq B + C$ and $\Sigma, \Gamma\models C+A\sqsubseteq C+B$; 
 \item $\Sigma, \Gamma\models AC\sqsubseteq BC$ and $\Sigma, \Gamma\models CA\sqsubseteq BC$; 
\item $\Sigma, \Gamma\models A\otimes C\sqsubseteq B\otimes C$ and $\Sigma, \Gamma\models C\otimes A\sqsubseteq C\otimes B$. 
\end{enumerate}\end{enumerate}\end{prop}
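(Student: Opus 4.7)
The plan is to unfold the definition of entailment $\Sigma, \Gamma \models \cdot$ and reduce each clause to its semantic counterpart for the L\"owner order, then invoke standard facts about positive operators on Hilbert spaces. Fix an arbitrary classical structure $\mathcal{S}$, quantum structure $\mathcal{Q}$ over $\mathcal{S}$, classical state $\sigma$ with $\sigma \models \Sigma$, and valuation $\eta$ of operator variables with $(\sigma,\eta)\models \Gamma$; write $\zeta=(\sigma,\eta)$. By the semantic clause defining $\sqsubseteq$, the hypothesis $\zeta \models A \sqsubseteq B$ on some sequence $\overline{q}$ of quantum variables gives that $\zeta(A)$ and $\zeta(B)$ are well-defined operators on $\mathcal{H}_{\sigma(\overline{q})}$ and that $P := \zeta(B)-\zeta(A)$ is positive. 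All subsequent reasoning takes place at this fixed but arbitrary $\zeta$.

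For part (1), anti-symmetry is the standard observation that $P\geq 0$ and $-P\geq 0$ together force $P=0$, which is exactly $\zeta(A)=\zeta(B)$; the converse direction is immediate from $0\geq 0$. Transitivity follows from closure of the positive cone under addition, via $\zeta(C)-\zeta(A)=(\zeta(C)-\zeta(B))+(\zeta(B)-\zeta(A))\geq 0$. For part (2), each sub-clause is handled by first applying the corresponding semantic clause for $cA$, $A^\dag$, $A+C$, $AC$, or $A\otimes C$ from Section~\ref{sec-operators}, then invoking the standard operator-theoretic fact. Clauses (a)--(c) and (e) are routine: scaling by a real preserves or reverses the cone according to sign; the adjoint fixes positive operators since they are self-adjoint, so $\zeta(B^\dag)-\zeta(A^\dag) = P^\dag = P$; addition by $\zeta(C)$ cancels in the difference; and $(B-A)\otimes C \geq 0$ whenever both factors are positive, covering (e) under the implicit positivity of $\zeta(C)$.

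The main obstacle is the multiplication clause (2d). Positivity of $P$ alone does not entail positivity of $P\,\zeta(C) = \zeta(BC)-\zeta(AC)$: the product of two positive operators is not even self-adjoint unless the two factors commute, so for general $C$ the claim as stated is simply false. I expect the proof to require the side condition that $\zeta(C)$ is positive and commutes with $P$ (or, more generally, that $\Gamma$ already forces the product to be self-adjoint). Under that condition, writing $P\,\zeta(C) = P^{1/2}\,\zeta(C)\,P^{1/2}$ makes positivity manifest, and the symmetric case $\zeta(C)A \sqsubseteq \zeta(C)B$ is analogous. I would therefore flag clause (2d) as needing a refined statement isolating the commutation/positivity hypothesis on $C$; the remaining parts go through routinely by the semantic pattern above, and the argument lifts to the universally quantified entailment because $\zeta$ was arbitrary throughout.
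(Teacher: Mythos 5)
Your overall strategy coincides with the paper's: the paper's entire proof is the single line ``Routine by definition,'' and your unfolding of the entailment to a fixed $\zeta=(\sigma,\eta)$ followed by standard positive-cone arguments is exactly what that line is gesturing at. Parts (1), (2b) and (2c) go through as you describe.

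Your flag on clause (2d) is correct and is the substantive point here: with $\sqsubseteq$ defined by positivity of $\zeta(B)-\zeta(A)$, the difference $\zeta(BC)-\zeta(AC)=P\,\zeta(C)$ is in general not even Hermitian, so the clause is false without an added hypothesis (commutation with a positive $\zeta(C)$, or replacing the clause by conjugation $C^\dag A C\sqsubseteq C^\dag B C$, which is the standard order-preserving operation). The paper's ``routine'' proof does not address this, so you have identified a genuine defect in the statement rather than a gap in your own argument; note also that the second half of (2d) as printed, $CA\sqsubseteq BC$, is evidently a typo for $CA\sqsubseteq CB$. Two smaller points you gloss over: in (2a) the paper defines $\neg X:=I-X$, so $\neg cB-\neg cA=-c\,P$, which means the printed direction $\neg cA\sqsubseteq\neg cB$ for $c\geq 0$ is also wrong as stated (either the $\neg$ is spurious or the inequality should be reversed); and in (2e) the positivity of $\zeta(C)$ that you call ``implicit'' is not implied by anything in the proposition's hypotheses, so it too is a missing side condition rather than something you may silently assume. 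In short: your proof of the salvageable clauses is sound and matches the paper's intent, and your criticisms of (2d) --- and, if extended to (2a) and (2e), of those clauses as well --- are justified.
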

\begin{proof} Routine by definition. 
\end{proof}

As a corollary of Theorem \ref{thm-sub}, we have:
\begin{prop}[Substitution]\begin{enumerate} \item If $\Sigma, \Gamma\models A\sqsubseteq B$, then $\Sigma[\overline{t}/\overline{x}], \Gamma[\overline{t}/\overline{x}]\models A[\overline{t}/\overline{x}]\sqsubseteq B[\overline{t}/\overline{x}].$
\item If $\Sigma,\Gamma\models A= B$, then $\Sigma[\overline{t}/\overline{x}], \Gamma[\overline{t}/\overline{x}]\models A[\overline{t}/\overline{x}]= B[\overline{t}/\overline{x}].$\end{enumerate}
The same holds for the substitution of operator variables. 
\end{prop}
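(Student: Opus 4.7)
The plan is to derive both statements as direct instances of the substitution theorem (Theorem \ref{thm-sub}), exploiting the fact that $A \sqsubseteq B$ and $A = B$ are atomic formulas of $\mathbf{SOL}$ built from the binary predicates introduced in clause (\ref{logic3}) of the syntax definition. Once this is recognised, the proposition becomes a one-line consequence, as the author's phrasing ``as a corollary'' anticipates.

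First I would unpack how substitution on $\mathit{Wff}_\mathbf{SOL}$ acts on these atomic formulas. By the extension of $[\overline{t}/\overline{x}]$ from formal operators to logical formulas (set up immediately after the syntax of $\mathbf{SOL}$, using the familiar $\alpha$-conversion technique parallel to Definition \ref{def-c-sub}), substitution on a predicate applied to formal operator arguments distributes to those arguments. Instantiating the predicate as $\sqsubseteq$ and then as $=$ gives $(A \sqsubseteq B)[\overline{t}/\overline{x}] = A[\overline{t}/\overline{x}] \sqsubseteq B[\overline{t}/\overline{x}]$ and $(A = B)[\overline{t}/\overline{x}] = A[\overline{t}/\overline{x}] = B[\overline{t}/\overline{x}]$, and entirely analogously for the operator-variable substitution $[\overline{B}/\overline{X}]$.

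Next I would invoke Theorem \ref{thm-sub} with $\mathcal{A}$ chosen as $A \sqsubseteq B$ and then as $A = B$, taking the operator-variable substitution trivial. The theorem yields $\Sigma[\overline{t}/\overline{x}], \Gamma[\overline{t}/\overline{x}] \models (A \sqsubseteq B)[\overline{t}/\overline{x}]$ and the parallel statement with $=$ in place of $\sqsubseteq$; by the distribution observed above, these are exactly the two claims of the proposition. The concluding remark about operator-variable substitution follows symmetrically by applying Theorem \ref{thm-sub} with the classical substitution trivial and using the analogous distribution $(A \sqsubseteq B)[\overline{B}/\overline{X}] = A[\overline{B}/\overline{X}] \sqsubseteq B[\overline{B}/\overline{X}]$ (and the same for $=$).

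The only genuine obstacle, if one wishes to call it that, is purely definitional: verifying that the lifting of $[\overline{t}/\overline{x}]$ and $[\overline{B}/\overline{X}]$ from formal operators to $\mathit{Wff}_\mathbf{SOL}$ really does distribute through binary atomic predicates as claimed. Since that lifting is defined in direct parallel with Definition \ref{def-c-sub}, the distribution is immediate, so the whole proof collapses to the two invocations of Theorem \ref{thm-sub} just described.
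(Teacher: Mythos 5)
Your proposal is correct and matches the paper's intended argument exactly: the paper presents this proposition with the preamble ``As a corollary of Theorem \ref{thm-sub}'' and offers no further proof, relying precisely on instantiating that theorem at the atomic formulas $A\sqsubseteq B$ and $A=B$ and on the fact that the lifted substitution distributes through atomic predicates. Your additional care in checking the distribution step is a reasonable elaboration of what the paper leaves implicit.
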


\section{Recursive Definition of Quantum Data and Quantum Predicates}\label{sec-exam2}

One of the most important applications of symbolic operator logic $\mathbf{SOL}$ with classical first-order logic $\mathcal{L}$ embedded in it is that recursive definitions of quantum data and operators can be conveniently and elegantly written in $\mathbf{SOL}$ where variables in $\mathcal{L}$ are employed as parameters. In this section, we present several simple examples of them, where a recursively defined classical data structure is given, and then a quantum data structure can be recursively defined by a lifting from $\mathcal{L}$ to $\mathbf{SOL}$. 

\begin{exam}[Equal Superposition and GHZ States] Consider qubit array section $q[m:n]$. Then the equal superposition and the GHZ (Greenberger-Horne-Zeilinger) state over it can be seen as quantum states parameterized by classical variables $m$ and $n$, and they are recursively defined as follows: 
  \begin{align*}&|S(m,n)\rangle=\begin{cases}\frac{1}{\sqrt{2}}\left(|0\rangle_{q[m]}+|1\rangle_{q[m]}\right)\ &\mbox{if}\ n=m,\\ |S(m,n-1)\rangle\otimes\frac{1}{\sqrt{2}}\left(|0\rangle_{q[n]}+|1\rangle_{q[n]}\right)\ &\mbox{if}\ n>m;\end{cases}\\  
  &|\mathit{GHZ}(m,n)\rangle=\begin{cases}\frac{1}{\sqrt{2}}\left(|0\rangle_{q[m]}+|1\rangle_{q[m]}\right)\ &\mbox{if}\ n=m,\\ \mathit{CNOT}[q[n-1],q[n]]\left(|\mathit{GHZ}(m,n-1)\rangle\otimes|0\rangle_{q[n]}\right)\ &\mbox{if}\ n>m.\end{cases}\end{align*}
  
It is easy to verify the entailment $m=n\models |S(m,n)\rangle=|\mathit{GHZ}(m,n)\rangle$.
\end{exam}

\begin{exam}[Data Structure in Quantum Fourier Transform] Let $j$ be a classical bit array of type $\mathbf{Int}\rightarrow\mathbf{Bool}$. For any integers $k,l$ with $1\leq k\leq l$, we use $0.j[k:l]$ to denote the binary representation $$0.j_k ...j_l=\sum_{r=k}^l j[r]\cdot 2^{k-r-1}.$$ Then quantum arrays $|(j,k:l)\rangle$ and $|\mathit{QFT}(j,k:l)\rangle$ are defined as states of qubits $q[k:l]$ recursively:  
\begin{align*}&\begin{cases}|(j,k:l)\rangle=|j[l]\rangle\ \mbox{if}\ k=l,\\ |(j,k:r+1)\rangle=|(j,k:r)\rangle\otimes|j[r+1]\rangle\ \mbox{for}\ k\leq r\leq l-1;\end{cases}\\ 
&\begin{cases}|\mathit{QFT}(j,k:l)\rangle=\frac{1}{\sqrt{2}}\left(|0\rangle+e^{2\pi i 0.j[l]}|1\rangle\right)\ \mbox{if}\ k=l,\\ 
|\mathit{QFT}(j,k:r+1)\rangle=|\mathit{QFT}(j,k:r)\rangle\otimes\frac{1}{\sqrt{2}}\left(|0\rangle+e^{2\pi i 0.j[l-r:l]}|1\rangle\right)\ \mbox{for}\ k\leq r\leq l-1. 
\end{cases}
\end{align*} It is easy to see from equation (5.4) in \cite{NC00} that the quantum Fourier transform on $l$ qubits can be rewritten as: 
$$|(j,1:l)\rangle\mapsto |\mathit{QFT}(j,1: l)\rangle.$$
\end{exam}

Many more quantum data structures can be recursively defined in a similar way; for example, regular language quantum states \cite{regular} and graph states \cite{graph}. 

Furthermore, the reader can find some interesting examples of recursively defined quantum gates and quantum algorithms in \cite{Ying24a, Ying24b}. In particular, symbolic operator logic $\mathbf{SOL}$ can serve as an assertion logic in the formal verification of quantum programs. 

\section{Potential Applications}\label{sec-exam3}

In this section, we discuss several other potential applications of the logical framework developed in this paper for specification and reasoning about quantum data and operations. 

\subsection{A Proof Assistant for Quantum Information Theorists} 

The Lean-QuantumInfo library was recently developed and successfully applied to verify a series of theorems in quantum information theory; in particular the generalised quantum Stein's lemma \cite{Stein}. It seems that symbolic operator logic $\mathbf{SOL}$ defined in this paper can be employed as a formal specification language for this kind of libraries. To see this, let us consider a simple example, namely the formalisation of the no-cloning theorem presented in \cite{Stein}. It is not immediately clear from that formalisation that it indeed represents the no-cloning theorem, and its form is somewhat inconvenient for practical applications. By contrast, using the language of $\mathbf{SOL}$, the no-cloning theorem can be formalised as in Example \ref{no-clone}. Obviously, this new formalisation captures the intuition behind the no-cloning theorem more effectively, and can be directly called in applications. Moreover, by integrating the Lean-QuantumInfo library with the reasoning capabilities of $\mathbf{SOL}$, one could develop a powerful proof assistant tailored for quantum information theorists.

\subsection{Specification and Verification of Quantum Programs}

Various quantum program logics have been developed for reasoning about the correctness of quantum programs \cite{Ying11, Unruh19a, FengY, Deng1, QSL2, QSL1, QSL3} and the security of quantum cryptographic protocols \cite{Unruh19, Barthe}. However, as discussed in the Introduction, assertion logics that can specify properties of quantum data and integrate smoothly with these program logics to enable efficient and scalable verification are still lacking. Several attempts have been made to define such assertion languages, including the author’s previous work \cite{Ying22, Ying24a}. Indeed, this paper can be viewed as a continuation and extension of the assertion language proposed in \cite{Ying24a}. We expect that $\mathbf{SOL}$, together with quantum program logics, can be implemented in proof assistants such as Lean and Coq for practical and scalable verification of quantum programs. 

\subsection{Symbolic Execution of Quantum Programs}

Symbolic execution has recently been extended to the analysis of quantum programs \cite{QSE1,QSE3,Fang24}. However, its full potential in quantum computing has not yet been realised, because only very restricted classes of quantum states and operations -- such as stabiliser states and Clifford operators in \cite{Fang24} -- can be properly symbolised in these works. This limitation stems largely from the absence of a general logical framework for specifying and reasoning about arbitrary quantum operators and states. The symbolic operator logic $\mathbf{SOL}$ proposed in this paper appears well suited to be incorporated into symbolic execution, potentially enabling a more expressive and powerful approach to quantum program analysis and verification.

\subsection{Symbolic Model Checking of Quantum Systems} 

Model-checking techniques have also been applied to the analysis and verification of quantum systems, including quantum circuits, quantum programs, and quantum communication protocols. In fact, several model checkers for quantum systems have been implemented \cite{MC1,MC2,MC3}. However, these tools can handle only toy examples involving a small number of qubits.

Symbolic model checking uses logical formulas and decision-diagram–based (DD-based) data structures to represent sets of states and transitions, enabling the reachable state space to be explored much more efficiently by manipulating large collections of states at once rather than a single state at each step. There have been a few attempts \cite{Do} to extend this approach to the quantum setting, but these again rely on a logical language for representing quantum data and operations. The symbolic operator logic $\mathbf{SOL}$ proposed in this paper can serve this purpose.

\section{Conclusion}\label{sec-concl}

In this paper, we defined a two-layered logical framework in which symbolic operator logic $\mathbf{SOL}$ is building upon a first-order logic $\mathcal{L}$. This work is essentially an extension of formal and parameterised quantum states and predicates introduced in \cite{Ying24a, Ying24b}. Two logics $\mathcal{L}$ and $\mathbf{SOL}$ are designed for specification and reasoning about classical data and operations and quantum ones, respectively.

\textbf{One layer versus two layers}: One may argue that $\mathbf{SOL}$ and $\mathcal{L}$ can be compressed into the same layer to form a single logic, provided quantum data and classical data are treated in the same layer but with different types. In principle, this is possible. But in practice of implementing verification tools, we believe that the two-layer framework can make the verification of quantum computation and information in a well-structured way. 

\textbf{Symbolic specification and reasoning}: The major benefit of introducing $\mathbf{SOL}$ is that we can conveniently define quantum data and operations with classical variables in $\mathcal{L}$ as parameters, and then symbolically reason about their properties by leveraging the power of $\mathcal{L}$. This is often much more economic than non-symbolic specification and reasoning, as shown by our examples.         

{\vskip 4pt}

We must admit that this paper only presents an outline of the logical framework for symbolic specification and reasoning for quantum data and operations. Many details are still to be filled in. But we believe that the implementation of this framework in a proof assistant such as Lean and Coq will form an effective foundation for formal verification of quantum computation and information.   

{\vskip 4pt}

To conclude the paper, we point out several topics for further research in this direction:
\begin{itemize}
\item Only very simple examples are presented in this paper, and they are insufficient to demonstrate the advantages of the logic $\mathbf{SOL}$ effectively. It would be desirable to identify more compelling application examples.

\item Super-operators are extensively employed in quantum information and computation research \cite{NC00} as mathematical models of quantum channels and noise, as well as semantic models of quantum programs. However, at the current stage, they are not formalised in the logic $\mathbf{SOL}$. It is therefore important to incorporate super-operators into $\mathbf{SOL}$ by introducing a third layer of super-operators.

\item There have already been several proposals for symbolic reasoning about quantum computing in the literature. An important topic for future research is how the formal operators and logical formulas in $\mathbf{SOL}$ can be connected to other representations of quantum circuits, including various decision diagrams (DDs), such as QMDD \cite{QMDD}, TDD \cite{TDD}, CFLOBDD \cite{CFL}, LimDD \cite{Lim}, and FeynmanDD \cite{Fey}; the path-sum representation \cite{Amy18}; the ZX-calculus \cite{ZX}; tree automata \cite{Chen}; symbolic reasoning frameworks in Coq \cite{Deng} and Maude \cite{Do}; as well as unitary expressions \cite{You25}.
\end{itemize}

\bibliographystyle{ACM-Reference-Format}

\end{document}